\pgfplotsset{compat=newest}
\DeclareMathOperator{\tr}{Tr}
\DeclareMathOperator{\Tr}{Tr}
\DeclarePairedDelimiter\floor{\lfloor}{\rfloor}
\newcommand{\coleq}{\mathrel{\mathop:}\nobreak\mkern-1.2mu=}
\newcommand{\pe}{\mathrel{\mathop+}\nobreak\mkern-1.2mu=}
\newcommand{\mc}{\mathcal}
\newcommand{\mr}{\mathrm}
\newcommand{\mbb}{\mathbb}
\newcommand{\expval}[1]{{\langle #1 \rangle}}
\newcommand{\ketbra}[2]{{\vert #1 \rangle \langle #2 \vert}}
\newcommand{\lket}[1]{\vert #1 \rangle\!\rangle}
\newcommand{\lbra}[1]{\langle\!\langle #1 \vert}
\newcommand{\lbraket}[2]{\langle\!\langle #1 \vert #2 \rangle\!\rangle}
\newcommand{\lketbra}[2]{\vert #1 \rangle\!\rangle\langle\!\langle #2 \vert}
\newcommand{\ptm}{{\mathrm{PTM}}}
\newtheorem{theorem}{Theorem}
\newtheorem{lemma}[theorem]{Lemma}%
\newtheorem{corollary}[theorem]{Corollary}%
\newtheorem{definition}{Definition}
\definecolor{applegreen}{rgb}{0.55, 0.71, 0.0}
\newcommand{\comments}[1]{}
\newcommand{\algorithmfootnote}[2][\footnotesize]{%
  \let\old@algocf@finish\@algocf@finish%
  \def\@algocf@finish{\old@algocf@finish%
    \leavevmode\rlap{\begin{minipage}{\linewidth}
    #1#2
    \end{minipage}}%
  }%
}
\NewDocumentCommand{\LeftComment}{s m}{%
  \Statex \IfBooleanF{#1}{\hspace*{\ALG@thistlm}}\(\triangleright\) #2}
\algnewcommand{\LineComment}[1]{\Statex // #1}
\begin{document}

\title{Quantum advantages for Pauli channel estimation}
\author{Senrui Chen}
\affiliation{Pritzker School of Molecular Engineering, The University of Chicago, Illinois 60637, USA}
\author{Sisi Zhou}
\affiliation{Pritzker School of Molecular Engineering, The University of Chicago, Illinois 60637, USA}
\affiliation{Institute for Quantum Information and Matter, California Institute of Technology, Pasadena, CA 91125, USA}
\author{Alireza Seif}
\affiliation{Pritzker School of Molecular Engineering, The University of Chicago, Illinois 60637, USA}
\author{Liang Jiang}
\affiliation{Pritzker School of Molecular Engineering, The University of Chicago, Illinois 60637, USA}
\date{\today}
\begin{abstract}

We show that entangled measurements provide an exponential advantage in sample complexity for Pauli channel estimation,
which is both a fundamental problem and a practically important subroutine for benchmarking near-term quantum devices.
The specific task we consider is to simultaneously learn all the eigenvalues of an $n$-qubit Pauli channel to $\pm\varepsilon$ precision.
We give an estimation protocol with an $n$-qubit ancilla that succeeds with high probability using only $\mc O(n/\varepsilon^{2})$ copies of the Pauli channel, while prove that any ancilla-free protocol (possibly with adaptive control and channel concatenation) would need at least $\Omega(2^{n/3})$ rounds of measurement.
We further study the advantages provided by a small number of ancillas. For the case that a $k$-qubit ancilla ($k\le n$) is available, we obtain a sample complexity lower bound of $\Omega(2^{(n-k)/3})$ for any non-concatenating protocol, and a stronger lower bound of $\Omega(n2^{n-k})$ for any non-adaptive, non-concatenating protocol,
which is shown to be tight.
We also show how to apply the ancilla-assisted estimation protocol to a practical quantum benchmarking task in a noise-resilient and sample-efficient manner, given reasonable noise assumptions.
Our results provide a practically-interesting example for quantum advantages in learning and also bring new insight for quantum benchmarking.

\end{abstract}
\maketitle

\begin{figure*}[t]
	\centering
	\includegraphics[width = 0.80\textwidth]{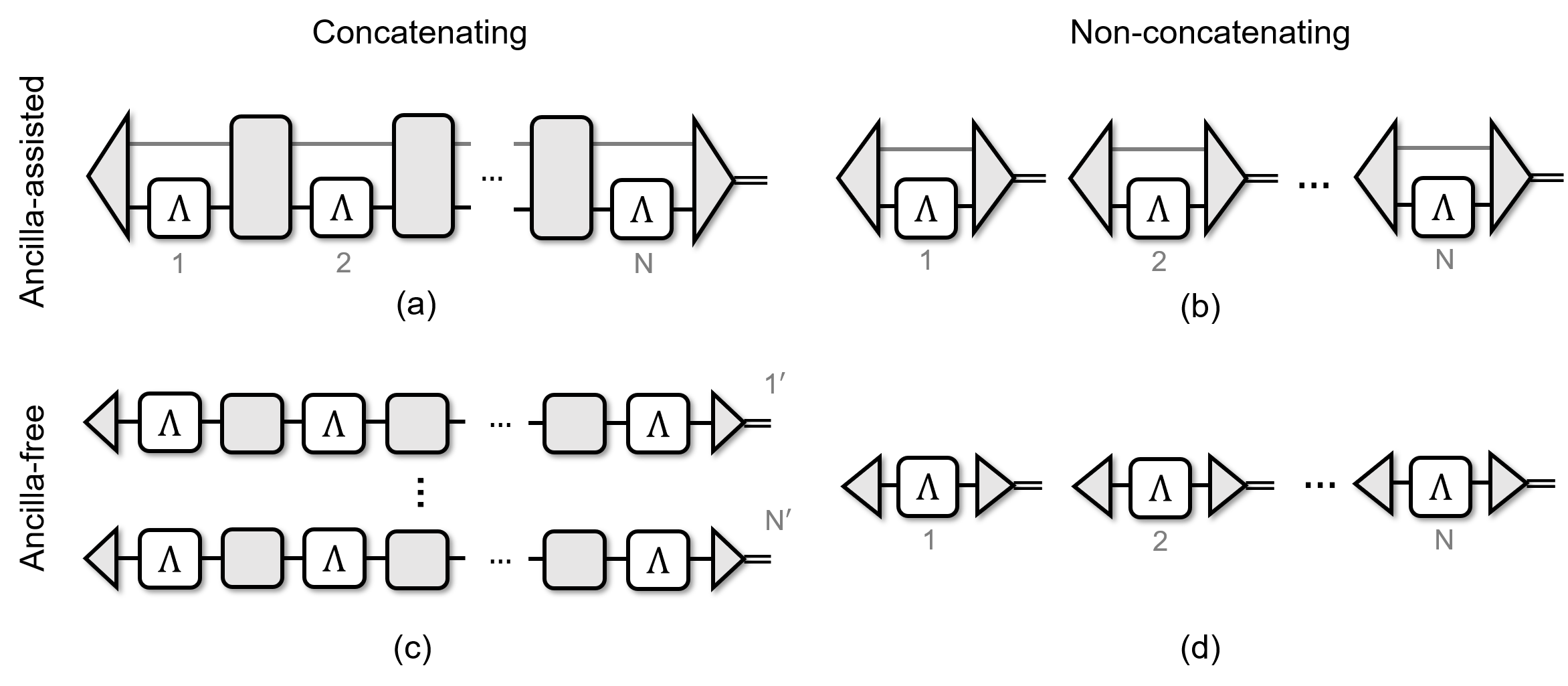}
	\caption{Different measurement models on $N$ copies of an unknown channel $\Lambda$. 
	The gray triangles denote state preparation and measurement, and the gray boxes denote some known processing channels.
	(a) \emph{Fully entangled measurement}: the most general way to measure a quantum channel where arbitrarily large entanglement and quantum memory is allowed; (b) \emph{Ancilla-assisted non-concatenating measurement}: For each sample of $\Lambda$, one input some entangled state and conduct an entangled measurement. The measurement must be completely destructive, which means no quantum memory is allowed; (c) \emph{Ancilla-free concatenating measurement}: One is allowed to sequally apply multiple copies of $\Lambda$ and some processing channels before applying a single round of measurement, with no ancilla allowed. Here $N'$ denotes the number of measurement rounds which is no larger than the number of samples; (d) \emph{Un-entangled measurement}: Neither ancilla nor concatenation is allowed.
	Additionally, measurement models (b), (c), (d) can be either \emph{adaptive} or \emph{non-adaptive}. If the measurement setting at a certain round depends on previous measurement outcomes, it is an adaptive protocol. Otherwise, it is non-adaptive. 
	}
	\label{fig:model}
\end{figure*}

One important challenge for the Noisy Intermediate-Scale Quantum (NISQ) era~\cite{preskill2018quantum} is to demonstrate quantum advantages on near-term devices.
Recent works have made groundbreaking progress towards demonstrating \emph{quantum computational advantages}~\cite{harrow2017quantum,arute2019quantum,zhong2020quantum,wu2021strong}, which means quantum computers can efficiently solve certain computational problems outside the reach of the most advanced classical computers.
However, computation is not the only aspect quantum computers can achieve meaningful advantages. 
Learning is yet another possibility, 
where the basic question is whether quantum computers (with resources such as \emph{quantum memory} and \emph{entangled measurements}) can help us learn certain properties of a physical system more efficiently.
This kind of quantum advantages have been explored by several recent works~\cite{bubeck2020entanglement,aharonov2021quantum,huang2021information,chen2021toward,rossi2021quantum} with positive examples including mixedness testing~\cite{bubeck2020entanglement}, unitarity testing~\cite{aharonov2021quantum}, Pauli expectation values estimation~\cite{huang2021information}, \textit{etc}.
However, all these quantum advantages proposals so far either focus on artificial problems or have no noise-resilient implementation.
It is thus highly desirable to identify a practically-interesting learning task that can be used to demonstrate a robust quantum advantage in the NISQ era.

A particularly interesting type of learning tasks, known as \emph{quantum benchmarking}~\cite{eisert2020quantum}, aims at characterizing the noise on a quantum device. This is yet another challenge for the NISQ era which is crucial to building better quantum hardware.
Pauli noise is one of the most important quantum noise models: On the one hand, it can describe a wide range of incoherent noise, including dephasing, deplorizing, bit-flip, etc. On the other hand, the recently developed ``randomized compiling'' technique~\cite{wallman2016noise,hashim2020randomized} can tailor any noise model in a universal gate set into Pauli noise. 
Many benchmarking protocols also rely on twirling the noise into a Pauli channel before extracting any information~\cite{erhard2019characterizing,magesan2011scalable,harper2020efficient,liu2021benchmarking}.
Because of the important role played by Pauli channels, it is of great interest to study the estimation of these objects in an efficient and practical way. 
Despite a long line of research~\cite{fujiwara2003quantum,hayashi2010quantum,chiuri2011experimental,ruppert2012optimal,collins2013mixed,flammia2020efficient,harper2021fast,flammia2021pauli,mohseni2008quantum,flammia2021pauli}, the ultimate sample complexity for Pauli channel estimation has not yet been fully characterized.

In this work, we demonstrate an {\textit{exponential}} quantum advantage for Pauli channel estimation. We show that, for the task of estimating the eigenvalues of an $n$-qubit Pauli channel (\textit{i.e.}, \emph{Pauli eigenvalues}) to additive error $\varepsilon$ in $l_\infty$ distance, there exists a measurement protocol assisted with an $n$-qubit ancilla that succeeds with high probability using $\mc O(n/\varepsilon^2)$ samples, while any ancilla-free protocol (possibly with adaptive control and channel concatenation) would require at least $\Omega(2^{n/3})$ rounds of measurements.
As a byproduct, this provides a lower bound for randomized benchmarking (RB)~\cite{magesan2011scalable,knill2008randomized} based Pauli noise estimation protocol, resolving an open problem raised in~\cite{flammia2020efficient}.

We then study the sample efficiency advantages provided by a restricted amount of ancilla.
While an ancilla larger than $n$ qubits will not help further improve the efficiency,
given a $k$-qubit $(0\le k\le n)$ ancilla, we obtain a lower bound of $\Omega(2^{(n-k)/3})$ for any non-concatenating protocol, and a stronger lower bound of $\Omega(n2^{n-k})$ for non-adaptive and non-concatenating protocols.
The latter is shown to be tight by an explicitly constructed protocol (see Algorithm~\ref{alg:main}).

Finally, we show how to apply the ancilla-assisted estimation protocol in the practical task of benchmarking Pauli gates.
Inspired by RB-type methods~\cite{flammia2020efficient, knill2008randomized,magesan2011scalable}, we design a protocol that is both robust against state-preparation-and-measurement (SPAM) errors and exponentially more sample-efficient than any ancilla-free scheme, under reasonable noise assumptions.
This protocol can be used to experimentally demonstrate a robust and practical quantum advantage on NISQ devices.

\textit{Preliminaries.{\textemdash}}
For an $n$-qubit Hilbert space, define ${\sf P}^n$ to be the Pauli group modulo the non-physical phase.
${\sf P}^n$ is an Abelian group isomorphic to $\mbb Z^{2n}_2$, so we can use elements of $\mbb Z^{2n}_2$ to uniquely label elements of ${\sf P}^n$. 
Specifically, we view every $a\in\mbb Z^{2n}_2$ as a $2n$-bit string $a={a_{x,1}a_{z,1}a_{x,2}a_{z,2}\cdots a_{x,n}a_{z,n}}$ corresponding to the Pauli operator
$$
	P_a = \otimes_{k=1}^n i^{a_{x,k}a_{z,k}} X^{a_{x,k}}Z^{a_{z,k}}
$$ 
where the phase is chosen to ensure Hermiticity.
We also define a sympletic inner product $\expval{\cdot,\cdot}$ within $\mbb Z_{2}^{2n}$ as 
$$
	\expval{a,b} = \sum_{k=1}^n (a_{x,k}b_{z,k}+a_{z,k}b_{x,k}) \mod{2}.
$$
One can verify that $P_aP_b = (-1)^\expval{a,b}P_bP_a$~\cite{nielsen2002quantum}.

An $n$-qubit Pauli channel $\Lambda$ is a quantum channel of the following form
\begin{equation}
	\Lambda(\cdot) = \sum_{a\in\mbb Z_2^{2n}}p_a P_a(\cdot)P_a,
\end{equation}
where $\bm{p}\coleq \{p_a\}_a$ is called the \emph{Pauli error rates}.
An important property of Pauli channels is that their eigen-operators are exactly the $4^n$ Pauli operators. Thus, an alternative expression for $\Lambda$ is
\begin{equation}
	\Lambda(\cdot) = \frac{1}{2^n}\sum_{b\in\mbb Z_2^{2n}}\lambda_b\Tr(P_b(\cdot))P_b,
\end{equation}
where $\bm\lambda \coleq \{\lambda_b\}_b$ is called the \emph{Pauli eigenvalues}~\cite{flammia2020efficient,flammia2021pauli}. 
These two sets of parameters, $\bm p$ and $\bm \lambda$, are related by the Walsh-Hadamard transform
\begin{equation}
	\begin{aligned}
		\lambda_b = \sum_{a\in\mbb Z_2^{2n}}p_a(-1)^\expval{a,b},\quad
		p_a = \frac{1}{4^n}\sum_{b\in\mbb Z_2^{2n}}\lambda_b(-1)^\expval{a,b}.
	\end{aligned}
\end{equation}

\noindent Both $\bm p$ and $\bm \lambda$ are physically interesting parameters:
The Pauli error rates are directly related to the error thresholds in fault-tolerant quantum computation~\cite{shor1996fault,aharonov2008fault} and have been the quantities of interest for many quantum benchmarking protocols~\cite{flammia2020efficient,harper2020efficient,harper2021fast,flammia2021pauli,liu2021benchmarking};
The Pauli eigenvalues quantify how well a Pauli observable is preserved through the noise channel (hence also known as \emph{Pauli fidelities}) and have applications in quantum error mitigation (see \textit{e.g.}~\cite{chen2020robust}).
In this work, we will focus on the estimation for $\bm\lambda$.

\textit{Comparison of different measurement models.{\textemdash}}
To study the advantages provided by different kinds of quantum resources,
we categorize measurement strategies into ancilla-assisted \textit{vs.} ancilla-free, concatenating \textit{vs.} non-concatenating, and adaptive \textit{vs.} non-adaptive measurements.
See Fig.~\ref{fig:model} and explanations in the captions.
We remark that, recent works on quantum advantages in property learning~\cite{huang2021information,aharonov2021quantum} have been focusing on the difference between the full-fledged entangled measurement Fig.~\ref{fig:model}(a) and the un-entangled measurement Fig.~\ref{fig:model}(d). 
Here, we introduce two intermediate measurement models Fig.~\ref{fig:model}(b) and Fig.~\ref{fig:model}(c) to separately study the role of ancilla and concatenation.
There are also practical reasons to care about those intermediate models. As an example, the ancilla-free concatenating measurement model Fig.~\ref{fig:model}(c) exactly describes most existing RB protocols (see \textit{e.g.}~\cite{helsen2020general}), including the RB-type Pauli channel estimation protocol in Ref.~\cite{flammia2020efficient}. 
Indeed, we show that it is the ancilla that provides an exponential advantage in sample complexity for Pauli channel estimation, while concatenation does not help improve the sample efficiency.
These results further advances our understanding of quantum advantages in property learning~\cite{huang2021information,aharonov2021quantum,bubeck2020entanglement,chen2021toward}.

Regarding these measurement models,
one further question to ask is whether a restricted amount of ancilla can provide any sample efficiency advantages.
In this work, we will characterize the advantages provided by a $k$-qubit ancilla ($0\le k\le n$) for the non-concatenating measurement models. 
This kind of quantitative trade-off relations between quantum resources and sample efficiency has not been explored in previous works~\cite{huang2021information,aharonov2021quantum,bubeck2020entanglement,chen2021toward} and may potentially lead to a new resource-theoretical interpretation of quantum entanglement~\cite{chitambar2019quantum}.

\textit{Upper bounds.{\textemdash}}
Our goal is to estimate the Pauli eigenvalues $\bm\lambda$ of an $n$-qubit Pauli channel $\Lambda$ to $\varepsilon$ precision in $l_\infty$ distance, \textit{i.e.}, estimating each $\lambda_a$ to $\varepsilon$ additive error.

When an $n$-qubit ancillary system is available, a simple protocol is as follows:
prepare $n$ Bell pairs, input one qubit from each pair to the Pauli channel, and apply a Bell measurement on the output.
Since the Pauli channel can be viewed as randomly applying one of the $4^n$ Pauli operators $P_a$ with probability $p_a$,
and each $P_a$ is mapped to a unique measurement outcome~\footnote{
This can be viewed as a consequence of the well-known \emph{superdense coding} protocol~\cite{bennett1992communication}.
}, we are effectively sampling from the probability distribution $\bm p$. 
The sample of $\bm p$ can then be used to construct an estimator for $\bm \lambda$ according to the Walsh-Hadamard transform. 
As shown in Theorem~\ref{th:upper}, this protocol has sample complexity $\mc O(n)$; 
on the other hand, when no ancilla is allowed, there is no way to sample from $\bm p$ and simultaneously estimate all elements of $\bm\lambda$ from a single measurement setting. Intuitively, this is what makes the task difficult.

In the following, we give a unified estimation protocol using $k$ ancilla qubits for $0\le k\le n$.
Roughly speaking, we divide the $n$ qubits of the Pauli channel into two disjoint subsystems containing $k$ and $n-k$ qubits, respectively, and deal with them separately. For the first subsystem, we introduce a $k$-qubit ancilla, input $k$ Bell pairs to both systems, and apply a Bell measurement; For the second subsystem, we will use \emph{stabilizer states} input and \emph{syndrome measurements} to be defined later. 
The measurement scheme is depicted in Fig.~\ref{fig:alg}. 

A rigorous description of the protocol is given in Algorithm~\ref{alg:main}.
Here, $\ket{\Psi_v}$ represents the Bell states on the $2k$-qubit subsystem defined as
\begin{equation}
	\ket{\Psi_v} = P_v\otimes I \ket{\Psi^+},\quad \ket{\Psi^+} = \frac{1}{2^k}\sum_{i=0}^{2^k-1}\ket i\ket i.
\end{equation}
A \emph{stabilizer group} $\sf S$ on the ($n-k$)-qubit subsystem is a group of $2^{n-k}$ commuting Pauli operators. 
Mathematically, $\sf S$ can be viewed as an ($n-k$)-dimensional subspace of $\mbb Z_2^{2(n-k)}$.
The \emph{stabilizer states} $\ket{\phi_e^{\sf S}}$ are the simultaneous eigenstates of all Pauli operators in $\sf S$, which can be expressed as
\begin{equation}
	\ketbra{\phi^{\sf S}_e}{\phi^{\sf S}_e} = \frac{1}{2^{n-k}}\sum_{s\in{\sf S}}(-1)^\expval{s,e}P_s,
\end{equation}
for $e\in {\sf S}^{\perp}\coleq {\mbb Z}_2^{2(n-k)}/{\sf S}$ known as the \emph{error syndrome}. 
Note that $\{\ket{\phi_e^{\sf S}}\}_e$ forms an orthonormal basis for the $(n-k)$-qubit subsystem.
The \emph{stabilizer covering} $\sf O$ is a set of stabilizer groups $\{{\sf S}_i\}_i$ such that every Pauli operator belongs to at least one ${\sf S}_i$~\cite{flammia2020efficient}.

\begin{figure}[htb]
\begin{algorithm}[H]
		\caption{$k$-qubit-ancilla-assisted Pauli channel estimation}
		\label{alg:main}
		\begin{algorithmic}[1]
			\Require
			(1)~$N$ copies of an $n$-qubit Pauli channel $\Lambda$. (2)~A stabilizer covering of $\sf{P}^{n-k}$ denoted as $\sf O$.
			\Ensure
			Estimates $\widehat{\bm\lambda}$ for the Pauli eigenvalues of $\Lambda$.
            \State $\widehat\lambda_a \coleq 0$,~$N_a\coleq 0$~\textbf{for all}~$a\in\mbb Z^{2n}_2$.
			\For{$\sf S \in \sf O$}
			\For{$i=1~\textbf{to}~\floor{N/|\sf O|}$}
			\State Input $\ket{\Psi_0}\otimes\ket{\phi^{\sf S}_0}$ to $\mathds 1\otimes\Lambda$.
			\State Measure in basis $\{\ket{\Psi_v}\otimes\ket{\phi^{\sf S}_e}\}$ with outcomes $v,e$.
			\For{$u\in\mbb Z^{2k}_{2},s\in{\sf S}$}
			\State$\widehat{\lambda}_{u\oplus s} \pe (-1)^{\expval{u,v}+\expval{s,e}}$,~$N_{u\oplus s}\pe 1$.
			\EndFor
			\EndFor
			\EndFor
			\State $\widehat\lambda_a \coleq \widehat\lambda_a / N_a$~\textbf{for all}~$a\in\mbb Z^{2n}_2$.
            \State\Return$\widehat{\bm\lambda}\coleq\{\widehat\lambda_a\}_a$.
            \Statex \textit{(Note: $\oplus$ stands for string concatenation.)}
		\end{algorithmic}
\end{algorithm}
\end{figure}

\begin{figure}[!htbp]
	\centering
	\includegraphics[width = 0.85\columnwidth]{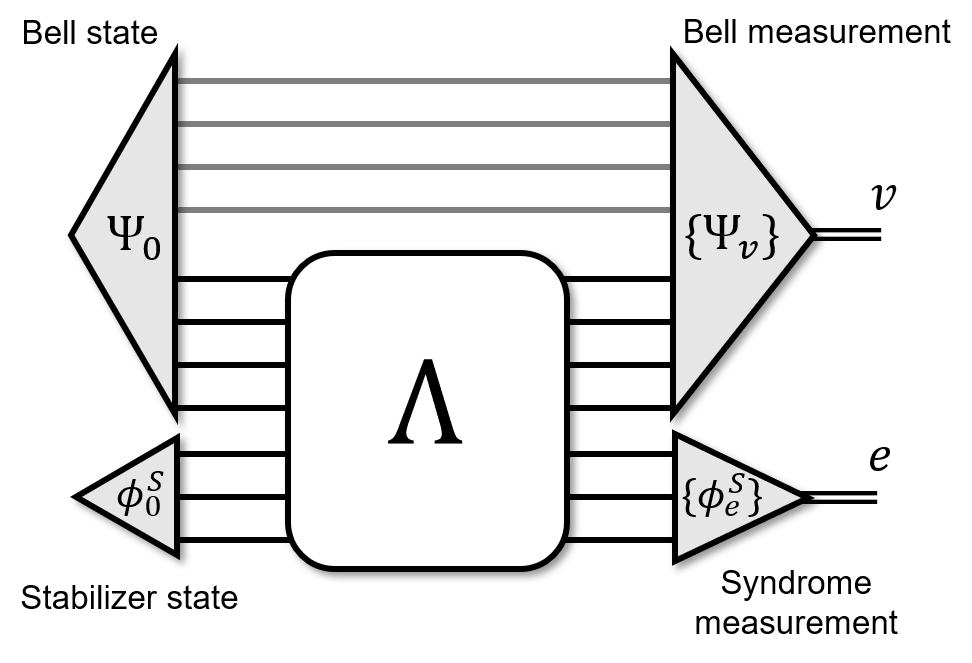}
	\caption{A single round of measurement for the $k$-qubit-assisted Pauli channel estimation protocol in Algorithm~\ref{alg:main}. 
	Here, a $4$-qubit ancilla is used to estimate a $7$-qubit Pauli channel.}
	\label{fig:alg}
\end{figure}

\begin{theorem}\label{th:upper}
Algorithm~\ref{alg:main} gives an estimate $\widehat{\bm{\lambda}}$ for the Pauli eigenvalues $\bm\lambda$ of any $n$-qubit Pauli channels that satisfies
\begin{equation}
    |\widehat{\lambda}_a-\lambda_a|\le\varepsilon,\quad \forall a\in \mbb Z_2^{2n}
\end{equation}
with success probability at least $1-\delta$, given the following number of samples
\begin{equation}
    N = \mathcal O (|{\sf O}|\times n\varepsilon^{-2}\log\delta^{-1}).
\end{equation}
\end{theorem}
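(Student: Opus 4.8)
The plan is to show that for each Pauli label $a$, Algorithm~\ref{alg:main} computes $\widehat\lambda_a$ as the empirical average of $N_a$ independent, $\pm1$-valued random variables whose common mean is exactly $\lambda_a$, and then to control the deviation with Hoeffding's inequality followed by a union bound over all $4^n$ labels. The whole argument therefore reduces to (i) computing the joint outcome distribution of a single measurement round, (ii) proving that each single-round increment is unbiased for the target $\lambda_a$, and (iii) counting how many independent increments land in each $\widehat\lambda_a$.

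First I would fix a stabilizer group ${\sf S}\in{\sf O}$ and analyze one round. Writing $a=(a_1,a_2)$ with $a_1\in\mbb Z_2^{2k}$ on the ancilla-coupled subsystem and $a_2\in\mbb Z_2^{2(n-k)}$ on the second subsystem, I would propagate the input $\ketbra{\Psi_0}{\Psi_0}\otimes\ketbra{\phi_0^{\sf S}}{\phi_0^{\sf S}}$ through $\mathds 1\otimes\Lambda$ using the error-rate form $\Lambda(\cdot)=\sum_a p_a P_a(\cdot)P_a$. On the $2k$-qubit block the identity on the ancilla together with the maximally entangled structure converts $(I\otimes P_{a_1})\ketbra{\Psi^+}{\Psi^+}(I\otimes P_{a_1})$ into $\ketbra{\Psi_{a_1}}{\Psi_{a_1}}$, since the transpose signs of the Pauli cancel under conjugation. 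On the second subsystem, the stabilizer identity $\ketbra{\phi_0^{\sf S}}{\phi_0^{\sf S}}=2^{-(n-k)}\sum_{s\in{\sf S}}P_s$ combined with $P_{a_2}P_sP_{a_2}=(-1)^{\expval{a_2,s}}P_s$ shows that $P_{a_2}$ merely relabels the syndrome, sending $\ket{\phi_0^{\sf S}}$ to some $\ket{\phi_{e(a_2)}^{\sf S}}$ with $\expval{s,e(a_2)}\equiv\expval{s,a_2}$ for all $s\in{\sf S}$. Because both measurement bases are orthonormal, this collapses to the clean outcome law $\Pr[v,e]=\sum_{a_2:\,e(a_2)=e}p_{(v,a_2)}$.

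Next I would take the expectation of the single-round update $(-1)^{\expval{u,v}+\expval{s,e}}$ against this law. Summing over $e$ removes the syndrome constraint and, using $\expval{s,e(a_2)}=\expval{s,a_2}$, gives $\E\big[(-1)^{\expval{u,v}+\expval{s,e}}\big]=\sum_{a}p_a(-1)^{\expval{a,u\oplus s}}$, which is exactly $\lambda_{u\oplus s}$ by the Walsh--Hadamard transform. Thus every increment contributing to $\widehat\lambda_a$ is an independent bounded variable with mean $\lambda_a$; for a fixed $a$ the increments come from distinct rounds and hence from independent channel copies, so they are mutually independent (even though different labels sharing one round reuse the same outcome). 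The covering property of ${\sf O}$ ensures each $a_2$ lies in at least one ${\sf S}\in{\sf O}$, so $N_a\ge\floor{N/|{\sf O}|}$ for every $a$. Applying Hoeffding to $\widehat\lambda_a=N_a^{-1}\sum_j X_j$ with $X_j\in[-1,1]$ yields $\Pr[|\widehat\lambda_a-\lambda_a|>\varepsilon]\le 2\exp(-N_a\varepsilon^2/2)$, and a union bound over the $4^n$ labels forces $\floor{N/|{\sf O}|}=\Omega\big((n+\log\delta^{-1})\varepsilon^{-2}\big)$, which is absorbed into $N=\mc O(|{\sf O}|\,n\varepsilon^{-2}\log\delta^{-1})$. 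The main obstacle is the first step: tracking the symplectic bookkeeping carefully enough that the Bell outcome $v$ and the syndrome outcome $e$ recombine into precisely the phase $(-1)^{\expval{a,u\oplus s}}$ of the Walsh--Hadamard transform; once the outcome law and unbiasedness are in place, the concentration argument is routine.
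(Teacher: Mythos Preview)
Your proposal is correct and follows essentially the same route as the paper: establish that $(-1)^{\expval{u,v}+\expval{s,e}}$ is an unbiased $\pm1$ estimator of $\lambda_{u\oplus s}$, then apply Hoeffding's inequality and a union bound over all $4^n$ labels with the covering property of ${\sf O}$. The only cosmetic difference is that the paper computes the outcome distribution $p(v,e)$ by expanding the states in the Pauli basis and writing everything directly in terms of the eigenvalues $\lambda_{u\oplus s}$, whereas you work in the error-rate picture, tracking how a random $P_a$ maps $\ket{\Psi_0}\otimes\ket{\phi_0^{\sf S}}$ to $\ket{\Psi_{a_1}}\otimes\ket{\phi_{[a_2]}^{\sf S}}$ and then invoking the Walsh--Hadamard relation $\lambda_b=\sum_a p_a(-1)^{\expval{a,b}}$ at the end; both computations are equivalent and yield the same unbiasedness statement.
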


\begin{proof}[Proof of Theorem~\ref{th:upper}]
The probability distribution of measurement outcomes at Line~5 in Algorithm~\ref{alg:main} can be calculated as
\begin{equation}\label{eq:PauliDistribution}
	p(v,e) = \frac{1}{2^{n+k}}\sum_{u\in \mbb Z^{2k}_2}\sum_{s\in \mbb {\sf S}}\lambda_{u\oplus s}(-1)^{\expval{u,v}}(-1)^{\expval{s,e}}.
\end{equation}
See the supplemental material for more details.
Therefore, $p(v,e)$ and $\lambda_{u\oplus s}$ are related by the Walsh-Hadamard transform. Taking the inverse transform, we get
\begin{equation}
	\lambda_{u\oplus s} = \sum_{v\in \mbb Z^{2k}_2}\sum_{e\in \mbb {\sf S}^\perp} p(v,e)(-1)^{\expval{u,v}+\expval{s,e}}.
\end{equation}
Thus, $(-1)^{\expval{u,v}+\expval{s,e}}$ is an un-biased estimator of $\lambda_{u\oplus s}$. According to Hoeffding's bound, $N_0 = \mc O(\varepsilon^{-2}\log\delta_0^{-1})$ samples are enough to estimate a single $\lambda_{u\oplus s}$ to additive precision $\varepsilon$ with success probability at least $1-\delta_0$. Since every $\lambda_a$ is covered by some stabilizer group ${\sf S}\in{\sf O}$, a total sample complexity of $N = \mathcal O (|{\sf O}|\times n\varepsilon^{-2}\log\delta^{-1})$ is enough to estimate all $\lambda_a$ to additive error $\varepsilon$ simultaneously with success probability at least $1-\delta$, by setting $\delta_0 \coleq 4^{-n}\delta$ and applying the union bound.
\end{proof}

\begin{corollary}\label{co:optimalupper}
	There exists a $k$-qubit-ancilla-assisted non-adaptive non-concatenating Pauli channel estimation protocol achieving $|\widehat{\lambda}_a-\lambda_a|\le\varepsilon$ for all $a\in\mbb Z_2^{2n}$ with probability $1-\delta$ using $N=\mc O(n2^{n-k}\varepsilon^{-2}\log\delta^{-1})$ samples.
\end{corollary}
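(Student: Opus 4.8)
The plan is to obtain the corollary as an immediate consequence of Theorem~\ref{th:upper} by exhibiting a sufficiently small stabilizer covering. Theorem~\ref{th:upper} guarantees that Algorithm~\ref{alg:main} achieves the desired $l_\infty$ accuracy with the stated success probability using $N=\mc O(|{\sf O}|\times n\varepsilon^{-2}\log\delta^{-1})$ samples, and the algorithm is manifestly non-adaptive (the ordered list of stabilizer groups ${\sf S}\in{\sf O}$ and the input states $\ket{\Psi_0}\otimes\ket{\phi^{\sf S}_0}$ are fixed in advance, independent of outcomes), non-concatenating (each of the $N$ rounds consumes exactly one copy of $\Lambda$), and uses precisely a $k$-qubit ancilla. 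Hence everything reduces to producing a stabilizer covering ${\sf O}$ of ${\sf P}^{n-k}$ with $|{\sf O}|=\mc O(2^{n-k})$; substituting such a bound into Theorem~\ref{th:upper} yields $N=\mc O(n2^{n-k}\varepsilon^{-2}\log\delta^{-1})$ as claimed. Writing $m\coleq n-k$, the remaining task is purely geometric: cover all of $\mbb Z_2^{2m}$ by maximal isotropic subspaces (stabilizer groups), each of dimension $m$, using only $\mc O(2^m)$ of them.

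Next I would pin down why $\mc O(2^m)$ is the correct target and that it is attainable. A counting argument sets the scale: each stabilizer group is an $m$-dimensional subspace of $\mbb Z_2^{2m}$ and therefore contains $2^m-1$ non-identity Pauli labels, whereas there are $4^m-1$ non-identity labels in total, so any covering must have at least $(4^m-1)/(2^m-1)=2^m+1$ members. A covering meeting this bound exists exactly when the non-identity elements of $\mbb Z_2^{2m}$ partition into disjoint maximal isotropic subspaces, i.e.\ a \emph{symplectic spread} of $\mbb Z_2^{2m}$ (equivalently a complete set of $2^m+1$ mutually unbiased bases in dimension $2^m$). Such a spread is known to exist for every $m$, and I would invoke the explicit minimal covering of Ref.~\cite{flammia2020efficient}, which gives $|{\sf O}|=2^m+1=\mc O(2^{n-k})$. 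Concretely, one can identify $\mbb Z_2^{2m}$ with $\mbb F_{2^m}\times\mbb F_{2^m}$ and take the subspaces $\{(x,\alpha x):x\in\mbb F_{2^m}\}$ for each $\alpha\in\mbb F_{2^m}$ together with $\{(0,y):y\in\mbb F_{2^m}\}$; these $2^m+1$ subspaces partition the space, and each is isotropic for the trace-symplectic form $\mr{Tr}(xz'+zx')$, which matches the paper's $\expval{\cdot,\cdot}$ under a self-dual $\mbb F_2$-basis of $\mbb F_{2^m}$.

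The single nontrivial ingredient, and thus the step I would treat most carefully, is precisely the existence of this $\mc O(2^m)$-size covering: a naive choice (for instance, one stabilizer group per non-identity Pauli) would overshoot the optimal size by a factor of roughly $2^m$ and would only reproduce the weaker $\mc O(n4^{n-k}\varepsilon^{-2}\log\delta^{-1})$ bound. In characteristic two the isotropy check is what makes the field construction work, since $\mr{Tr}(\alpha xx'+\alpha x'x)=\mr{Tr}(2\alpha xx')=0$, so the only genuine care needed is the basis identification relating the trace form to the standard symplectic form used throughout the paper. Once the spread is in hand, no further concentration or estimation analysis is required: the accuracy guarantee, success probability, non-adaptivity, non-concatenation, and $k$-qubit ancilla count are all inherited verbatim from Theorem~\ref{th:upper}.
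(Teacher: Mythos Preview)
Your proposal is correct and follows essentially the same approach as the paper: both deduce the corollary from Theorem~\ref{th:upper} by invoking the existence of a stabilizer covering of ${\sf P}^{n-k}$ of size $2^{n-k}+1$, equivalently a complete set of mutually unbiased bases / symplectic spread in dimension $2^{n-k}$. The paper's proof is a two-line citation of this fact, whereas you additionally supply the counting lower bound and the explicit finite-field construction; this extra detail is sound but not required.
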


\begin{proof}
	This follows from the existence of a stabilizer covering for ${\sf P}^{n-k}$ of size $2^{n-k}+1$~\cite{lawrence2002mutually}, which in turn follows from the existence of $2^{n-k}+1$ mutually-unbiased bases for $(n-k)$-qubit systems~\cite{wootters1989optimal}.  
\end{proof}

\noindent We remark that, the choice of stabilizer covering in Corollary~\ref{co:optimalupper} gives the optimal sample complexity for all \emph{non-adaptive} and \emph{non-concatenating} Pauli eigenvalues estimation protocols, as proved in the next section.
From a practical point of view, it involves $(n-k)$-qubit stabilizer states that might be difficult to prepare.
A more experimental friendly version is to choose the stabilizer covering generated by all $3^{n-k}$ possible Pauli measurements, in which case only Pauli eigenstates preparation and Pauli measurements are required (on the $(n-k)$-qubit subsystem), at the expense of a sub-optimal sample complexity $N=\mc O(n3^{n-k}\varepsilon^{-2}\log\delta^{-1})$.

\textit{Lower bounds.{\textemdash}} 
We have established a sample complexity upper bound of $\mc O(n2^{n-k})$ for non-adaptive non-concatenating k-qubit-ancilla-assisted Pauli eigenvalues estimation protocols, which implies a $\mc O(n)$ upper bound for the $n$-qubit ancilla case and a $\mc O(n2^n)$ upper bound for the ancilla-free case. In the following Theorem~\ref{th:lower_main}, we provide corresponding lower bounds to justify the exponential advantage provided by ancilla in this task.

\begin{theorem}\label{th:lower_main}
	For any estimation protocol that give an estimate $\widehat{\bm{\lambda}}$ for the Pauli eigenvalues $\bm\lambda$ of an arbitrary unknwon $n$-qubit Pauli channel $\Lambda$ such that
	\begin{equation}
		|\widehat{\lambda}_a-\lambda_a|\le 1/2,\quad \forall a\in \mbb Z_2^{2n}
	\end{equation}
	holds with high probability, the number of samples of $\Lambda$ must satisfies (recall Fig.~\ref{fig:model})
	\begin{enumerate}[label=(\Alph*)]
		\item $N=\Omega(n2^{n-k})$, for non-adaptive non-concatenating k-qubit-ancilla measurements. 
		\item $N=\Omega(2^{(n-k)/3})$, for adaptive non-concatenating k-qubit-ancilla measurements. 
		
		\item $N\ge N'=\Omega(2^{n/3})$, for adaptive concatenating ancilla-free measurements, where $N'$ stands for the number of measurement rounds.
		
		\item $N=\Omega(n)$, for fully entangled measurements. 
	\end{enumerate}
\end{theorem}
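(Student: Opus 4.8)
The plan is to derive all four bounds from \emph{hypothesis-testing lower bounds}: I would exhibit ensembles of valid Pauli channels on which any estimator achieving $|\widehat\lambda_a-\lambda_a|\le 1/2$ for all $a$ must, with high probability, correctly identify a hidden parameter, and then lower bound the number of channel uses needed to identify that parameter in each measurement model. To handle adaptivity uniformly I would represent an arbitrary protocol by its \emph{learning tree}: each root-to-leaf path records the outcomes of the $N$ uses of $\Lambda$, a node's children correspond to the (possibly outcome-dependent) next setting, and the induced leaf distribution $p_\Lambda$ is all the estimator sees. Indistinguishability then reduces to bounding a distance (total variation, or the $\chi^2$-divergence) between $p_{\Lambda_0}$ under a null channel $\Lambda_0$ and $\E_c\, p_{\Lambda_c}$ under a randomly planted alternative.

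Two ensembles do the work. For the exponential bounds (B) and (C) I would take $\Lambda_0$ depolarizing-type and, for a uniformly random Pauli index $c\in\mbb Z_2^{2n}$, let $\Lambda_c$ shift the eigenvalue $\lambda_c$ by more than $1$, so that any $1/2$-accurate estimate is forced away from that of $\Lambda_0$. The key estimate is that a single use of $\Lambda_c$ reveals almost nothing about $c$: expanding the input as $\rho=2^{-(n+k)}\sum_{d,b}r_{d,b}\,P_d\otimes P_b$, the planted shift perturbs the output only through the coefficients $r_{d,c}$ aligned with the planted direction, and the purity bound $\Tr(\rho^2)\le 1$ forces $\E_c\sum_d r_{d,c}^2=\mc O(2^{-(n-k)})$ — the $2^{k}$ enhancement over the ancilla-free value $2^{-n}$ being exactly the advantage an ancilla can buy. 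Feeding this per-copy budget into the tree, I would show the distinguishing advantage is at most $\mc O(N^{3}2^{-(n-k)})$, where the cube arises because the mean-zero planting makes the leading detectable effect a third-order term along the path; setting this to $\Omega(1)$ forces $N=\Omega(2^{(n-k)/3})$, giving (B). For (C) I would additionally argue that concatenation only reshapes eigenvalues into their powers $\lambda_a^m$ and cannot beat the same overlap budget, so each of the $N'$ rounds obeys the same bound and $N\ge N'=\Omega(2^{n/3})$.

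The remaining two bounds are governed by a ``coupon-collector'' phenomenon that accounts for the factor of $n$ rather than an exponential. For these I would use an ensemble whose eigenvalues encode many independent signs on a \emph{valid} family (e.g.\ stabilizer-projector channels indexed by a hidden subspace, whose nonzero eigenvalues are exactly $1$), so that estimating all $4^n$ eigenvalues to precision $1/2$ amounts to resolving $\Omega(n)$ bits against shot noise. Since the worst of $4^n$ simultaneous estimates must be accurate, a Fano/union argument shows that even fully entangled measurements need $N=\Omega(n)$ copies, giving (D) and matching the $\mc O(n)$ upper bound implicit in Theorem~\ref{th:upper}. For the non-adaptive $k$-ancilla bound (A), I would combine this $\Omega(n)$ collector factor with a counting argument showing that a single fixed round can usefully constrain only an $\mc O(2^{-(n-k)})$ fraction of the eigenvalues, so resolving all of them costs $N=\Omega(n2^{n-k})$, tight against Corollary~\ref{co:optimalupper}.

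The main obstacle is the core $\chi^2$ bound for the adaptive exponential case: unlike a single-shot distinguishing problem, the input at each step may depend on all previous outcomes, so the per-copy overlaps do not simply multiply and one must control a genuine martingale / higher-moment expansion along the tree. Getting the third-order term to dominate (and the first two to cancel or be absorbed) — which is what yields the cube-root rather than a weaker $2^{n/2}$ or stronger $2^{n}$ exponent — is the delicate step, and for (C) one must further verify that interleaving concatenations and adaptive processing channels cannot circumvent it. A secondary but necessary check throughout is that every planted ensemble consists of completely positive (i.e.\ $\bm p\ge 0$) Pauli channels while still forcing the estimation task.
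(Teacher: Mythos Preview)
Your high-level framework---reduce to hypothesis testing on planted Pauli channels, represent adaptive protocols as learning trees, and argue indistinguishability---matches the paper. For (A) in particular your plan is essentially the paper's: Fano gives $\Omega(n)$ bits to recover, and a per-round mutual-information bound of $\mc O(2^{k-n})$ (proved via Cauchy--Schwarz on the rank-$\le 2^k$ matrix $C_j=B_jA^\dagger$ together with a Pauli-twirl identity) forces $N=\Omega(n2^{n-k})$.

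There are, however, two concrete gaps.

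\medskip
\textbf{The cube root in (B)/(C).} You attribute the exponent $1/3$ to a ``third-order term along the path'' in a martingale/$\chi^2$ expansion. That is not the mechanism, and I do not see how to make that heuristic rigorous here (mean-zero planting in Le Cam/Ingster arguments typically yields second-moment, not third-moment, control). The paper instead bounds total variation directly. The key second-moment estimate is
\[
\sum_{a\ne 0}\frac{\Tr(C^\dagger P_a C P_a)^2}{\Tr(C^\dagger C)^2}\;\le\;2^k\cdot 2^n,
\]
from which, by Markov's inequality, the ``bad set'' of $a$'s with overlap exceeding a threshold $t$ has size $\le 2^{n+k}/t^2$. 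Optimizing $t$ against the bad-set fraction---set $t=(2^k/(2^n-1))^{1/3}$---and union-bounding over the $N$ rounds gives $\mathrm{TV}(p_1,p_2)\le 2N\,(2^k/(2^n-1))^{1/3}$. The cube root is a threshold trade-off, not a moment order. For (C), concatenation is handled by showing a recursively defined scalar $\xi_a[m]$ (tracking the $P_a$-component through the interleaved processing channels) satisfies $|\xi_a[m]|\le 1$ by induction, so each measurement round still obeys the same overlap bound regardless of how many copies were concatenated.

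\medskip
\textbf{Part (D).} Your plan offers no mechanism to upper bound the information extractable by a \emph{fully entangled} strategy; ``Fano/union'' alone cannot do this without a capacity bound on the other side. The paper's device is \emph{teleportation stretching}: because Pauli channels are teleportation-covariant, any adaptive entangled protocol on $N$ uses of $\Lambda$ is simulated by a single POVM on $J_\Lambda^{\otimes N}$. Holevo then gives
\[
I((a,s):o)\;\le\;S\!\Big(\mathop{\mbb E}_{(a,s)} J_{\Lambda_{(a,s)}}^{\otimes N}\Big)-N\mathop{\mbb E}_{(a,s)} S\!\big(J_{\Lambda_{(a,s)}}\big)\;\le\;2nN-(2n-1)N\;=\;N,
\]
which combined with Fano's $\Omega(n)$ yields $N=\Omega(n)$. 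Note the same $\Lambda_{(a,s)}$ ensemble is used for all four parts; no separate ``stabilizer-projector'' family is needed.
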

Indeed, Theorem~\ref{th:lower_main} and Corollary~\ref{co:optimalupper} establishes an exponential advantage of ancilla-assisted measurements over ancilla-free measurements even with channel concatenation (as in the RB-type Pauli channel estimation protocols in Ref.~\cite{flammia2020efficient}). 
Furthermore, for the non-concatenating cases, we see a roughly matching bounds for all ancilla size $0\le k\le n$, which can be interpreted as that a small number of ancilla ($k=o(n)$) would not help much in improving the sample efficiency. 
We also see from (D) that the sample complexity of Algorithm~\ref{alg:main} with $n$ ancilla qubits is optimal among all entangled strategies, thus we need not study protocols with more than $n$ ancilla qubits.

\begin{proof}[Sketch of the proof]
	Our proof generalizes the techniques of Huang~\textit{et al.}~\cite{huang2021information} in proving lower bounds for learning Pauli expectation values of quantum states.
	The key is to construct the following set of Pauli channels 
	\begin{equation}\label{eq:construction}
		\left\{\Lambda_{(a,s)}(\cdot) = \frac{1}{2^n}(I\Tr(\cdot)+sP_a\Tr(P_a(\cdot)))\right\}_{a,s},
	\end{equation}
	for $a\in\{1,\cdots,4^n-1\}$ and $s=\pm1$.
	An estimation protocol satisfying the assumption of Theorem~\ref{th:lower_main} is able to identify an arbitrary element of this set using $N$ copies of the channel. We can then use information-theoretical arguments to lower bound $N$ for (A); The bounds in (B) and (C) are proved by reducing the learning problem to a channel discrimination problem between the completely-deplorizing channel and the channels in Eq.~\eqref{eq:construction}; 
	To prove the bound in (D), we first use \emph{teleportation  stretching}~\cite{ pirandola2017fundamental,pirandola2019fundamental} to reduce any estimation protocols on $N$ copies of the Pauli channel into a POVM measurement on $N$ copies of their Choi states~\cite{choi1975completely,jamiolkowski1972linear}, and then apply the Holevo's theorem~\cite{holevo1973bounds}. See the supplemental material for a full proof of Theorem~\ref{th:lower_main}.
\end{proof}

\emph{Applications in quantum benchmarking.{\textemdash}}
The estimation protocol we described earlier is based on an ideal situation where we have direct access to the Pauli channel of interest and ignore the state preparation and measurement (SPAM) error.
In the supplemental material, we introduce a quantum benchmarking protocol for the Pauli gates, which can be viewed as an extension of Algorithm~\ref{alg:main}.
The protocol uses an $n$-qubit ancilla and (possibly imperfect) Bell state preparation and measurements. However, the application of the Pauli channel in Fig.~\ref{fig:alg} is now replaced by applying a sequence of random Pauli gates to be characterized, as shown in Fig.~\ref{fig:spam}. By repeating this procedure for different sequence lengths, we reduce the sensitivity to SPAM errors.
In addition to robustness against SPAM errors, this protocol provides an exponential advantage in sample complexity compared to any ancilla-free protocol, as long as the ancillary system has a long coherence time and is well isolated from the main system, among other reasonable noise assumptions.
These assumptions can potentially be satisfied by \textit{e.g.}, an ion trap platform~\cite{wright2019benchmarking,pino2021demonstration}. 

\begin{figure}[htp]
	\centering
	\includegraphics[width=0.8\columnwidth]{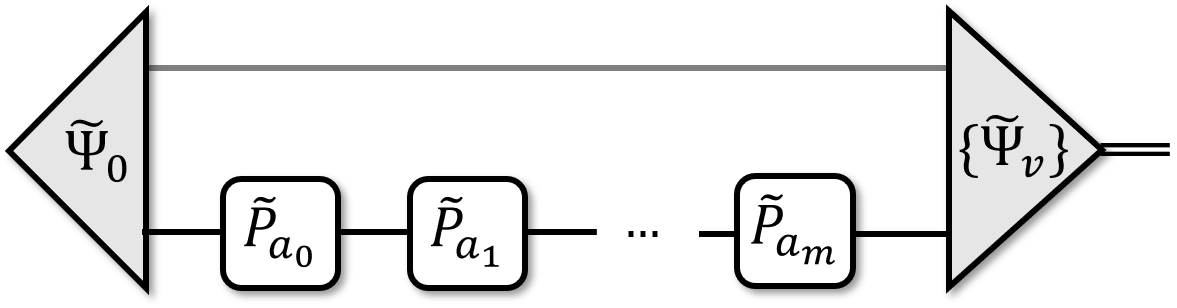}
	\caption{A single round of measurement for the $n$-qubit ancilla-assisted Pauli gate benchmarking protocol detailed in the supplemental material. Here $\widetilde{\Psi}_0$ and $\{ \widetilde{\Psi}_v \}$ stands for the (noisy) Bell states/measurements. $\{\widetilde{P}_{a_t}\}_{t=0}^m$ stands for a sequence of random (noisy) Pauli gates to be characterized.}
	\label{fig:spam}
\end{figure}

\textit{Summary and outlook.{\textemdash}}
In this work, we show a provable quantum advantage provided by entangled measurements for a learning task~\cite{aharonov2021quantum,huang2021information} of Pauli channel estimation, which is a practically useful tool urgently needed to characterize large quantum systems.
For quantum benchmarking, our results provide fundamental efficiency limits for Pauli noise estimation, which partly solve an open problem raised in~\cite{flammia2020efficient}.  
We also describe how the ancilla-assisted Pauli channel estimation protocol can be applied to a practical quantum benchmarking tasks in a noise-resilient and sample-efficient manner.
Our results provide a promising tool for both characterizing near-term quantum devices and demonstrating quantum advantages in those systems

Several interesting questions remain to be explored in the future, including exploring the quantum advantages in learning other properties of Pauli channels (e.g., the Pauli error rates~\footnote{
    It is shown in~\cite{flammia2021pauli} that the Pauli error rates for an $n$-qubit Pauli channel can be estimated to small error in $l_\infty$ distance using un-entangled measurements with $\mc O(\log n)$ samples, so there is no large advantages for entangled strategies. Still, one could expect a sample-efficiency advantage under other practically-interesting metric of estimation precision.
}); analyzing the sample complexity for learning Pauli channels with more specific structures; and analyzing the experimental performance of our algorithms in comparison to other ancilla-free protocols~\cite{harper2020efficient,hashim2020randomized}.

\begin{acknowledgments}
We acknowledge support from the ARO (W911NF-18-1-0020, W911NF-18-1-0212), ARO MURI (W911NF-16-1-0349), AFOSR MURI (FA9550-19-1-0399, FA9550-21-1-0209), DoE Q-NEXT, NSF (EFMA-1640959, OMA-1936118, EEC-1941583), NTT Research, and the Packard Foundation (2013-39273).
S.Z. acknowledges funding provided by the Institute for Quantum Information and Matter, an NSF Physics Frontiers Center (NSF Grant PHY-1733907).
A.S. is supported by a Chicago Prize Postdoctoral Fellowship in Theoretical Quantum Science.
\end{acknowledgments}

\bibliography{BibPauli}

\onecolumngrid
\newpage

\begin{appendix}

	\section{Details about the proof of Theorem~\ref{th:upper}}

	In this section, we provide details in the derivation of Eq.~\eqref{eq:PauliDistribution}, \textit{i.e.}, the distribution of measurement outcomes at Line~5 in Algorithm~\ref{alg:main}. For clarity, denote the $k$-qubit Hilbert space of the ancilla as $A$, and divide the $n$-qubit Hilbert space of the main system into a $k$-qubit subspace $B$ and an ($n-k$)-qubit subspace $C$.
	The input state to the channel is
	$
		\ket{\Psi_0}_{AB}\otimes\ket{\phi_0^{\sf S}}_C,
	$
	and the measurement basis is
	$
		\{\ket{\Psi_v}_{AB}\otimes\ket{\phi_e^{\sf S}}_C\}_{v,e}.
	$
	Here, $\ket{\Psi_v}_{AB}$ are Bell states and can be expressed as
	\begin{equation}
		\begin{aligned}
			\ketbra{\Psi_v}{\Psi_v} &\coleq (P_v\otimes I) \ketbra{\Psi^+}{\Psi^+} (P_v\otimes I)\\
			&= \frac{1}{4^k}\sum_{u\in\mbb Z_2^{2k}} P_v P_u P_v \otimes P_u^{T}\\
			&= \frac{1}{4^k}\sum_{u\in\mbb Z_2^{2k}}(-1)^{\expval{u,v}}P_u \otimes P_u^{T},
		\end{aligned}
	\end{equation} 
	And the stabilizer state $\ket{\phi_e^{\sf S}}_C$ is defined as
	\begin{equation}
		\ketbra{\phi_e^{\sf S}}{\phi_e^{\sf S}}\coleq \frac{1}{2^{n-k}}\sum_{s\in{\sf S}}(-1)^{\expval{s,e}}P_s,
	\end{equation}
	for $e\in{\sf S}^\perp \coleq \mbb Z_2^{2(n-k)}/{\sf S}$. We remark that, the stabilizer state is well-defined for all $e\in\mbb Z_2^{2(n-k)}$, but $\ket{\phi_a^{\sf S}}$ and $\ket{\phi_b^{\sf S}}$ represent the same state if $a+b\in{\sf S}$ (bitwise modulo 2 sum), so we only need to consider the quotient space of $\mbb Z_2^{2(n-k)}$ over ${\sf S}$. When calculating the sympletic inner product $\expval{s,e}$, one should understand $e$ as an arbitrary representative of the coset it stands for.

	Therefore, the measurement outcome distribution can be calculated as
	\begin{equation}
	\begin{aligned}
		p(v,e) &= \Tr\left((\ketbra{\Psi_v}{\Psi_{v}}_{AB}\otimes\ketbra{\phi^{\sf S}_e}{\phi^{\sf S}_e}_{C})\mathds 1^A\otimes\Lambda^{BC}(\ketbra{\Psi_{0}}{\Psi_{0}}_{AB}\otimes\ketbra{\phi^{\sf S}_0}{\phi^{\sf S}_0}_{C})\right)\\
		&= \frac{1}{4^{n+k}}\Tr\left(\sum_{u,u'\in \mbb Z^{2k}_2}\sum_{s,s'\in \mbb {\sf S}} \lambda_{u'\oplus s'} \left((-1)^\expval{u,v}P_u\otimes P_u^T\otimes (-1)^{\expval{s,e}}P_s\right)_{ABC}\left( P_{u'}\otimes P_{u'}^T\otimes P_{s'} \right)_{ABC}\right)\\
		&= \frac{1}{2^{n+k}}\sum_{u\in \mbb Z^{2k}_2}\sum_{s\in \mbb {\sf S}}\lambda_{u\oplus s}(-1)^{\expval{u,v}}(-1)^{\expval{s,e}},
	\end{aligned}
	\end{equation}
	which is exactly Eq.~\eqref{eq:PauliDistribution} in the main text.
    It is then obvious that $p(v,e)$ and $\lambda_{u\oplus s}$ are related by the Walsh-Hadamard transform (see~\cite[Lemma~4]{flammia2020efficient}).

	\section{Proof of the lower bounds}\label{sec:lower}

	\subsection{A tight lower bound for non-adaptive and non-concatenating strategies}\label{app:lower1}
	
	In this section, we prove a matching lower bound of $\Omega(n2^{n-k})$ for all non-adaptive and non-concatenating k-qubit ancilla-assisted Pauli channel estimation protocols, which include the ancilla-free strategies ($k=0$) and $n$-qubit ancilla assisted strategies ($k=n$) as two special cases. 
	Recall that, by a non-adaptive and non-concatenating protocol we mean that, for each sample of the Pauli channel $\Lambda$, we prepare an $n+k$ qubits state, input it to $\Lambda\otimes \mathds1$, and apply a POVM measurement on the joint output state. The input state and measurement setting for the $i$th sample is not allowed to depend on previous measurement outcomes, nor do we allow concatenating multiple samples of $\Lambda$ in a single round of measurement (which is used in RB-type Pauli error estimation protocols~\cite{flammia2020efficient}). 
	
	\begin{theorem}\label{th:lower}
	For any non-adaptive, non-concatenating $k$-qubit ancilla-assisted protocols that give an estimate $\widehat{\bm{\lambda}}$ of the Pauli eigenvalues $\bm\lambda$ of an arbitrary unknown $n$-qubit Pauli channel such that
	\begin{equation}\label{eq:thm1}
		|\widehat\lambda_a - {\lambda_a}|< \frac12,\quad \forall a\in\mbb Z_2^{2n}
	\end{equation}
	holds with high probability, the number of samples of $\Lambda$ required is at least $\Omega(n2^{n-k})$.
	\end{theorem}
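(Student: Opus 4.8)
The plan is to turn the estimation guarantee into a channel-identification task and then bound the per-sample information gain, with the restriction to a $k$-qubit ancilla entering through the ancilla dimension $2^k$.

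First I would fix the hard instance family $\{\Lambda_{(a,s)}\}$ from Eq.~\eqref{eq:construction}, indexed by $a\in\{1,\dots,4^n-1\}$ and $s=\pm1$. Each $\Lambda_{(a,s)}$ has Pauli eigenvalues $\lambda_0=1$, $\lambda_a=s$, and $\lambda_b=0$ otherwise, and one checks it is a legitimate Pauli channel (all error rates $p_c=\tfrac{1}{4^n}(1+s(-1)^{\expval{c,a}})\ge 0$). Since $\lambda_a=\pm1$ while every other nonidentity eigenvalue vanishes, any estimator meeting Eq.~\eqref{eq:thm1} pins down both the location $a$ and the sign $s$, i.e.\ it identifies the hidden label $(a,s)$ with high probability. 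As there are $2(4^n-1)$ equally likely labels, Fano's inequality forces the mutual information between the label and the full outcome record $o_1,\dots,o_N$ to be $\Omega(\log 4^n)=\Omega(n)$.

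Next I would exploit non-adaptivity. Because the inputs and POVMs are fixed in advance, the outcomes are conditionally independent given the label, so $I(\mathrm{label};o_1\cdots o_N)\le\sum_{i=1}^N I(\mathrm{label};o_i)$, and it suffices to show each term is $\mc O(2^{-(n-k)})$. The crucial simplification is that averaging the output over $s=\pm1$ cancels the signal $\tfrac{s}{2^n}P_a\otimes\Xi_a$, so the label-averaged output state is exactly the completely depolarized (null) output $\tfrac{1}{2^n}I\otimes\rho_A$. Using this null distribution as reference and $D(\cdot\|\cdot)\le\chi^2(\cdot,\cdot)$, I would bound a single sample's information by the label-averaged chi-square distance of the conditional outcome distributions from the null. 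Writing $M^{(o)}$ for the POVM elements and $\rho$ for the (pure) input on the system--ancilla space, a short computation reduces the per-sample bound to controlling $\tfrac{1}{4^n-1}\sum_a\sum_o [\Tr_A(M^{(o)}_a R_a)]^2/q(o)$, where $M_a=\Tr_S((P_a\otimes I)M^{(o)})$ and $R_a=\Tr_S((P_a\otimes I)\rho)$ are operators on the $k$-qubit ancilla, $q(o)=\Tr_A(M_0 R_0)$, and $R_0=\rho_A$. Two structural facts drive the estimate: positivity of $M^{(o)}\ge0$ and $\rho\ge0$ yields the operator dominations $-M_0\preceq M_a\preceq M_0$ and $-R_0\preceq R_a\preceq R_0$ (whence $|\Tr_A(M_a R_a)|\le q(o)$ for every $a$), while Pauli completeness gives $\sum_a\lVert M_a\rVert_2^2=2^n\Tr((M^{(o)})^2)$ and $\sum_a\lVert R_a\rVert_2^2=2^n\Tr(\rho^2)$.

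The main obstacle is the last step: extracting the factor $2^{-(n-k)}$ by bounding $\sum_a[\Tr_A(M_a R_a)]^2$ against $q(o)$. The ancilla dimension $2^k$ must enter here, since each ancilla operator lives in dimension $2^k$ (so e.g.\ $\lVert\cdot\rVert_2^2\le 2^k\lVert\cdot\rVert_\infty^2$), whereas the completeness relations force the overlaps to be spread across all $4^n-1$ Paulis rather than concentrated on one. The delicate part is to balance the operator-domination bounds against the completeness sums so that the contribution per outcome scales like $2^k\,q(o)^2$ rather than the naive $4^n\,q(o)^2$; a crude Cauchy--Schwarz (bounding $\lVert R_a\rVert_2^2$ by its worst case $2^k$ for all $a$) is too lossy and must be replaced by an argument that uses how thinly the signal is distributed over the Pauli labels. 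Carrying this through gives $I(\mathrm{label};o_i)=\mc O(2^{-(n-k)})$, whence $N\cdot\mc O(2^{-(n-k)})=\Omega(n)$ and $N=\Omega(n2^{n-k})$, matching the upper bound of Corollary~\ref{co:optimalupper}.
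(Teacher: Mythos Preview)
Your overall architecture is the same as the paper's: encode a uniform label $(a,s)$ into the channel family of Eq.~\eqref{eq:construction}, invoke Fano to get $I(\text{label};o_{1:N})=\Omega(n)$, use non-adaptivity to split into per-round terms, and bound each by a chi-square-type quantity against the null (completely depolarizing) distribution. That part is fine.

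The gap is precisely where you say it is: you do not actually carry out the per-sample bound $I(\text{label};o_i)=\mc O(2^{k-n})$, and the representation you set up (partial traces $M_a,R_a$ on the ancilla, operator dominations $-M_0\preceq M_a\preceq M_0$) is not the one that makes the argument go through cleanly. The paper's key maneuver is to work instead with the \emph{matrixizations}: write the pure input $\ket{A}$ and rank-one POVM vector $\ket{B_j}$ as $2^n\times 2^k$ matrices $A,B_j$, and set $C_j\coloneqq B_jA^\dagger$, a $2^n\times 2^n$ matrix of rank at most $2^k$. Then the conditional probability becomes $p(j\mid a,s)=w_j 2^k\bigl(\Tr(C_j^\dagger C_j)+s\,\Tr(P_aC_jP_aC_j^\dagger)\bigr)$, and the chi-square bound reduces to controlling $\sum_{a\neq 0}\Tr(P_aC_jP_aC_j^\dagger)^2/\Tr(C_j^\dagger C_j)^2$. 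The rank constraint enters via Cauchy--Schwarz against a rank-$2^k$ projector $\Pi$ onto the range of $C_j$:
\[
\Tr(P_aC_j^\dagger P_aC_j)^2=\Tr(P_aC_j^\dagger P_aC_j\,\Pi)^2\le 2^k\,\Tr(C_jC_j^\dagger P_aC_jC_j^\dagger P_a),
\]
after which the Pauli-twirl identity $\sum_a P_aXP_a=2^n\Tr(X)I$ collapses the sum over $a$ to $2^n\Tr(C_j^\dagger C_j)^2$. This two-line combination is exactly the ``balance'' you flagged as missing; your ancilla-side operators $M_a,R_a$ do not expose the low-rank structure of $C_j$, which is why a direct Cauchy--Schwarz on them is too lossy.
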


	Our proof techniques generalize the information-theoretical arguments by Huang \textit{et al.}~\cite{huang2021information}, which in turn stem from previous work on sample complexity lower bounds for quantum tomography~\cite{flammia2012quantum,haah2017sample,roth2018recovering}. Consider a communication protocol between Alice and Bob where their share the following ``codebook'':
	\begin{equation}
		(a,s)\in\{1,\cdots,4^n-1\}\times\{\pm 1\} ~\longrightarrow~ \Lambda_{(a,s)}(\cdot) = \frac{1}{2^n}\left(I\Tr(\cdot) + s P_a\Tr(P_a(\cdot))\right).
	\end{equation}
	Now, Alice picks at uniform random one out of the $2(4^n-1)$ possible pairs of $(a,s)$ and then send $N$ copies of the channel $\Lambda_{(a,s)}$ to Bob. If there exists a Pauli channel estimation protocol using $N$ samples and satisfying the assumption of Theorem~\ref{th:lower}, Bob can use that protocol to uniquely determine Alice's choice of $(a,s)$ with high probability, since the Pauli eigenvalues of any $\Lambda_{(a,s)}$ only take values from $\{-1,0,+1\}$. Suppose Bob's input state and POVM outcome for the $i$th sample is $\{\rho_i,E_i\}$. According to Fano's inequality, the mutual information between the random variable pair $(a,s)$ and Bob's measurement results has the following lower bound
	\begin{equation}
		I\left((a,s):\{\rho_1,E_1\},...,\{\rho_N,E_N\}\right) \ge \Omega(\log(2(4^n-1))) = \Omega(n).
	\end{equation}
	We also know by assumption that the measurement outcomes $\{\rho_i,E_i\}$ are independent from each other, conditioned on $(a,s)$. The chain rule of mutual information then gives that
	\begin{equation}
		\sum_{i=1}^N I\left((a,s),\{\rho_i,E_i\}\right) = I((a,s):\{\rho_1,E_1\},...,\{\rho_N,E_N\}) \ge \Omega(n).
	\end{equation}
	We will show that $I\left((a,s):\{\rho_i,E_i\}\right)\le \mc O(2^{k-n})$ in the following lemma. This would then give us the desired sample complexity lower bound $N\ge \Omega(n2^{n-k})$, which completes the proof of Theorem~\ref{th:lower}. 
	
	\begin{lemma}\label{le:mutual1}
	$
	I\left((a,s):\{\rho_i,E_i\}\right)\le \dfrac{2^k}{2^n-1}.
	$
	\end{lemma}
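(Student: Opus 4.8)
The plan is to bound the single-round mutual information by the Holevo quantity and then control the latter through a $\chi^2$-divergence, adapting the state-learning argument of Huang \textit{et al.} to the channel setting. Since the protocol is non-adaptive and non-concatenating, the input $\rho_i$ and POVM $\{E_i\}$ for the $i$-th sample are fixed, and the outcome is obtained by measuring the output state $\sigma_{(a,s)}^{(i)}\coleq(\Lambda_{(a,s)}\otimes\id_A)(\rho_i)$ on the $n$-qubit system $M$ together with the $k$-qubit ancilla $A$. Writing the completely depolarizing output as $\bar\sigma^{(i)}=2^{-n}I_M\otimes\rho_0$ with $\rho_0\coleq\Tr_M\rho_i$, the crucial observation is that averaging over the sign $s=\pm1$ cancels the perturbation, so $\tfrac{1}{2(4^n-1)}\sum_{(a,s)}\sigma_{(a,s)}^{(i)}=\bar\sigma^{(i)}$. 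Holevo's theorem then gives $I((a,s):\{\rho_i,E_i\})\le \tfrac{1}{2(4^n-1)}\sum_{(a,s)}D\!\left(\sigma_{(a,s)}^{(i)}\,\big\|\,\bar\sigma^{(i)}\right)$, reducing everything to bounding relative entropies to the depolarizing output.

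Next I would control each relative entropy by the (maximal) $\chi^2$-divergence, $D(\sigma\|\bar\sigma)\le\Tr\!\big[(\sigma-\bar\sigma)(\bar\sigma)^{-1}(\sigma-\bar\sigma)\big]$, the quantum analogue of the elementary classical bound $D(p\|q)\le\chi^2(p,q)$. The deviation has the clean form $\sigma_{(a,s)}^{(i)}-\bar\sigma^{(i)}=2^{-n}\,s\,P_a\otimes R_a$ with $R_a\coleq\Tr_M[(P_a\otimes I_A)\rho_i]$, while on its support $(\bar\sigma^{(i)})^{-1}=2^n I_M\otimes\rho_0^{-1}$. Substituting and using $P_a^2=I_M$ and $\Tr_M I_M=2^n$ collapses the $\chi^2$-divergence to $\Tr[R_a\rho_0^{-1}R_a]$, independent of $s$. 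One also notes that $\sigma_{(a,s)}^{(i)}$ shares the ancilla marginal $\rho_0$ of $\bar\sigma^{(i)}$ because $\Lambda_{(a,s)}$ is trace-preserving, so it is supported on $M\otimes\mathrm{supp}(\rho_0)$ and the pseudo-inverse causes no trouble.

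The heart of the argument is the summation $\sum_{a=1}^{4^n-1}\Tr[R_a\rho_0^{-1}R_a]$. Expanding $\rho_i=2^{-n}\sum_b P_b\otimes R_b$ in the Pauli basis of $M$ and using $\Tr_M[P_bP_c]=2^n\delta_{bc}$, one obtains $\sum_b\Tr[R_b\rho_0^{-1}R_b]=2^n\Tr\!\big[\rho_i(I_M\otimes\rho_0^{-1})\rho_i\big]=2^n\Tr\!\big[\rho_i^2(I_M\otimes\rho_0^{-1})\big]$. Here the ancilla dimension enters: since $0\le\rho_i\le I$ gives $\rho_i^2\le\rho_i$ and $I_M\otimes\rho_0^{-1}\ge0$, this is at most $2^n\Tr[\rho_i(I_M\otimes\rho_0^{-1})]=2^n\,\mathrm{rank}(\rho_0)\le 2^{n+k}$. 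Subtracting the $b=0$ term $\Tr[\rho_0\rho_0^{-1}\rho_0]=1$ yields $\sum_{a\ne0}\Tr[R_a\rho_0^{-1}R_a]\le 2^{n+k}-1$. Combining with the count over $s$, $I\le \tfrac{2^{n+k}-1}{4^n-1}\le\tfrac{2^k}{2^n-1}$, where the last step uses $2^{n+k}-1\le 2^k(2^n+1)$ together with $4^n-1=(2^n-1)(2^n+1)$.

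I expect the main obstacle to be justifying the relative-entropy-to-$\chi^2$ inequality in the genuinely non-commuting regime (the states $\sigma_{(a,s)}^{(i)}$ and $\bar\sigma^{(i)}$ need not commute) and handling rank-deficient $\rho_0$ rigorously via supports and pseudo-inverses; the purely algebraic steps—Pauli orthogonality, the collapse of the $\chi^2$-divergence, and the operator inequality $\rho_i^2\le\rho_i$—are routine. It is precisely the rank bound $\mathrm{rank}(\rho_0)\le 2^k$ that quantifies how a $k$-qubit ancilla boosts the per-sample information, and hence drives the final $\Omega(n2^{n-k})$ sample-complexity lower bound through the chain-rule argument of Theorem~\ref{th:lower}.
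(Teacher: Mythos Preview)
Your argument is correct and reaches the same bound, but by a genuinely different route than the paper. The paper first reduces to pure input states and rank-$1$ POVMs, then bounds the \emph{classical} mutual information of the outcome $j$ via the elementary inequality $\log x\le\log y+(x-y)/y$, which amounts to a classical $\chi^2$ bound on the outcome distributions; the rank constraint enters through a Cauchy--Schwarz step $\Tr(P_aC_j^\dagger P_aC_j)^2\le 2^k\Tr(C_jC_j^\dagger P_aC_jC_j^\dagger P_a)$ for the rank-$\le 2^k$ matrix $C_j=B_jA^\dagger$, and the sum over $a$ is closed by Pauli twirling. You instead bound by the Holevo quantity and then invoke the \emph{quantum} inequality $D(\sigma\|\bar\sigma)\le\Tr[(\sigma-\bar\sigma)\bar\sigma^{-1}(\sigma-\bar\sigma)]$, after which the ancilla rank enters through $\rho_i^2\le\rho_i$ and $\Tr[\rho_0\rho_0^{-1}]=\mathrm{rank}(\rho_0)\le 2^k$.

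What each buys: your Holevo-based route is conceptually cleaner---it handles all POVMs at once and needs no pure-state or rank-$1$ reduction, and even yields the marginally sharper intermediate constant $(2^{n+k}-1)/(4^n-1)$. The price is that the quantum $D\le\chi^2$ step is not elementary in the non-commuting case; it follows, for instance, from the monotonicity of the sandwiched R\'enyi divergences $D=\widetilde D_1\le\widetilde D_2$ together with the Araki--Lieb--Thirring inequality $\Tr[(\sigma^{-1/4}\rho\sigma^{-1/4})^2]\le\Tr[\rho^2\sigma^{-1}]$ and $\log x\le x-1$. The paper's approach is more hands-on but fully self-contained, needing only the classical inequality $\log x\le\log y+(x-y)/y$ applied to outcome probabilities. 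Your handling of rank-deficient $\rho_0$ via supports is fine: since $\Lambda_{(a,s)}$ is trace-preserving, every $\sigma_{(a,s)}^{(i)}$ has ancilla marginal $\rho_0$ and hence lives in $M\otimes\mathrm{supp}(\rho_0)$, on which $\bar\sigma^{(i)}$ has full rank.
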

	\begin{proof}
	First notice that it suffices to consider pure state input and rank-$1$ POVM measurements. The latter comes from the fact that every POVM measurement can be viewed as a coarse-graining of some rank-1 POVM measurement. To see the former, consider the underlying distribution $ p((a,s),\{\rho_i,E_i\}) = p(a,s)p(\{E_i,\rho_i\}|a,s)$. The mutual information $I((a,s):\{\rho_i,E_i\})$ is convex about $p(\{E_i,\rho_i\}|a,s)$ when fixing $p(a,s)$ (see \textit{e.g.},~\cite[Theorem~2.7.4]{cover2012elements}), thus using mixed state input can never provide a larger mutual information.

	Thanks to this observation, we can without loss of generality let the input state be $\ket{A}$ and let the POVM measurement be $\{w_j2^{n+k}\ketbra{B_j}{B_j}\}_j$, where $\ket{A},\ket{B_j}\in\mbb C^{2^n\times 2^k}$ are unit vectors, and $\sum_j w_j = 1$ by normalization. We also abuse notations a little bit to let $A$ and $B_j$ denote the $2^n\times 2^k$ matrices that satisfy
	\begin{equation}\label{eq:matrixAB}
		\ket{A} = \sum_{p=0}^{2^n-1}\sum_{q=0}^{2^k-1}\expval{p|A|q}\ket{p}\ket{q},\quad     \ket{B_j} = \sum_{p=0}^{2^n-1}\sum_{q=0}^{2^k-1}\expval{p|B_j|q}\ket{p}\ket{q},
	\end{equation}
	where $\{\ket p\}$ and $\{\ket q\}$ are computational basis states. The normalization condition of $\ket{A}$ and $\ket{B_j}$ is equivalent to
	\begin{equation}
		\Tr(A^\dagger A) = \Tr(B_j^\dagger B_j) = 1.
	\end{equation}
	We also define $C_j\coleq B_jA^\dagger$ which is a $2^n\times 2^n$ matrix of rank less or equal to $2^k$.
	
	\medskip

	\noindent The mutual information between $(a,s)$ and a single round of measurement outcome $j$ can be upper bounded as
	\begin{equation}
	\begin{aligned}
		I((a,s):j) &= H(j) - H(j|a,s)\\
		&= -\sum_j\left(\mathop{\mbb{E}}_{(a,s)}p(j|a,s)\right)\log\left(\mathop{\mbb{E}}_{(a,s)}p(j|a,s)\right) + \mathop{\mbb{E}}_{(a,s)}\sum_j p(j|a,s)\log p(j|a,s)\\
		&\le \sum_j\frac{\mathop{\mbb{E}}_{(a,s)}[p(j|a,s)^2]-\mathop{\mbb{E}}_{(a,s)}[p(j|a,s)]^2}{\mathop{\mbb{E}}_{(a,s)}[p(j|a,s)]},
	\end{aligned}
	\end{equation}
	where the inequality follows from the fact that $\log(x)\le\log(y)+\frac{x-y}{y}$ in which we take $x\coleq p(j|a,s)$ and $y\coleq \mathop{\mbb E}_{(a,s)}[p(j|a,s)]$.
	
	\noindent The conditional probability $p(j|a,s)$ can be calculated as
	\begin{equation}
	\begin{aligned}
		p(j|a,s) &= w_j2^{n+k}\bra{B_j}\Lambda_{(a,s)}\otimes\mathds 1(\ketbra{A}{A})\ket{B_j}\\
		&= w_j\sum_{b=0}^{4^k}\left(\expval{B_j|I\otimes P_b|B_j}\expval{A|I\otimes P_b|A}+s\expval{B_j|P_a\otimes P_b|B_j}\expval{A|P_a\otimes P_b|A}\right)\\
		&= w_j\sum_{b=0}^{4^k}\left(\Tr(B_j^\dagger B_jP_b^T)\Tr(A^\dagger AP_b^T)+s\Tr(B_j^\dagger P_aB_jP_b^T)\Tr(A^\dagger P_aAP_b^T)\right) \\
		&= w_j2^k\left(\Tr(B_j^\dagger B_jA^\dagger A) + s\Tr(B_j^\dagger P_a B_jA^\dagger P_a A)\right)\\
		&= w_j2^k\left( \Tr(C_j^\dagger C_j)+s\Tr(P_aC_jP_aC_j^\dagger) \right),
	\end{aligned}
	\end{equation}
	where we expand the identity channel as $\mathds 1(\cdot) = 2^{-k}\sum_{b=0}^{4^k}P_b\Tr(P_b(\cdot))$ in the second line, and use the fact $2^{-k}\sum_{b=0}^{4^k}P_b\otimes P_b$ equals to the swap operator in the fourth line. 
	
	\medskip
	
	\noindent The average value and second moment of $p(j|a,s)$ according to the distribution of $(a,s)$ are
	\begin{equation}
	\begin{aligned}
		\mathop{\mbb{E}}_{(a,s)}[p(j|a,s)] &= w_j2^k \Tr(C_j^\dagger C_j), \\
		\mathop{\mbb{E}}_{(a,s)}[p(j|a,s)^2] &= w_j^2 4^k \left(\Tr^2(C_j^\dagger C_j) + \frac{1}{4^n-1}\sum_{a=1}^{4^n-1}\Tr^2(P_aC_jP_aC_j^\dagger) \right).
	\end{aligned}
	\end{equation}
	Hence we have the following bound for the mutual information
	\begin{equation}\label{eq:mutual1}
		I\left((a,s):j\right) \le \sum_j w_j2^k\Tr(C^\dagger_jC_j)\left(\frac{1}{4^n-1}\sum_{a=1}^{4^n-1}\frac{\Tr(P_aC^\dagger_jP_aC_j)^2}{\Tr(C^\dagger_jC_j)^2}\right).
	\end{equation}
	Now we further calculate the R.H.S. of the above inequality. 
	Let $M=P_aC_j^\dagger P_aC_j$. Notice that
	\begin{equation}
		\text{rank}(M) \le \text{rank}(C_j) \le 2^k,
	\end{equation}
	which means there exists a rank-$2^k$ projector $\Pi$ such that $\Tr(M) = \Tr(M\Pi)$. According to Cauchy-Schwarz inequality,
	\begin{equation}\label{eq:matrixCS}
		\begin{aligned}
			\Tr(M)^2 & = \Tr(M\Pi)^2 \\
			& \le \Tr(MM^\dagger)\Tr(\Pi^\dagger\Pi)\\
			& = \Tr(C_jC^\dagger_jP_aC_jC_j^\dagger P_a )\times 2^k		
		\end{aligned}
	\end{equation}
	Substitute this into Eq.~\eqref{eq:mutual1},
	\begin{equation}\label{eq:mutualbound}
		\begin{aligned}
			I((a,s):j) &\le \sum_j w_j2^k\Tr(C^\dagger_jC_j)\left(\frac{2^k}{4^n-1}\sum_{a=1}^{4^n-1}\frac{\Tr(C_jC^\dagger_jP_aC_jC_j^\dagger P_a )	}{\Tr(C^\dagger_jC_j)^2}\right) \\
			&= \sum_j w_j2^k\Tr(C^\dagger_jC_j)\left(\frac{2^k}{4^n-1}\times     
			\frac{\sum_{a=0}^{4^n-1}\Tr(C_jC^\dagger_jP_aC_jC_j^\dagger P_a )-\Tr(C_jC_j^\dagger C_jC_j^\dagger)}{\Tr(C^\dagger_jC_j)^2}\right)\\
			&= \sum_j w_j2^k\Tr(C^\dagger_jC_j)\left(\frac{2^k}{4^n-1}\times     
			\frac{2^n\Tr(C_j^\dagger C_j)^2-\Tr(C_jC_j^\dagger C_jC_j^\dagger)}{\Tr(C^\dagger_jC_j)^2}\right)\\
			&\le \sum_j w_j2^k\Tr(C^\dagger_jC_j)\frac{2^k}{2^n-1}\\
			&= \frac{2^k}{2^n-1},
		\end{aligned}
	\end{equation}
	where the third line uses the following formula of Pauli twirling,
	\begin{equation}
	    \frac{1}{4^n}\sum_{a=0}^{4^n} P_a X P_a = \frac{1}{2^n}\Tr(X)I,
	\end{equation}
	and the last line follows from the fact that 
	\begin{equation}
		\sum_j w_j2^k\Tr(C^\dagger_jC_j)=\sum_j\mathop{\mbb{E}}_{(a,s)}[p(j|a,s)] = 1.
	\end{equation}
	This completes the proof of Lemma~\ref{le:mutual1}.
	\end{proof}
	
	\subsection{A lower bound for adaptive but non-concatenating strategies}
	
	In this section, we prove a (perhaps loose) lower bound of $\Omega(2^{(n-k)/3})$ for all adaptive and non-concatenating $k$-qubit ancilla-assisted Pauli channel estimation protocols, as stated in the following theorem.
	
	\begin{theorem}\label{th:adalower}
		For any adaptive, non-concatenating $k$-qubit ancilla-assisted protocols that give an estimate $\widehat{\bm{\lambda}}$ of the Pauli eigenvalues $\bm\lambda$ of an arbitrary unknown $n$-qubit Pauli channel such that
		\begin{equation}
			|\widehat\lambda_a - {\lambda_a}|< \frac12,\quad \forall a\in\mbb Z_2^{2n}
		\end{equation}
		holds with high probability, the number of samples of $\Lambda$ required is at least $\Omega(2^{(n-k)/3})$.
		\end{theorem}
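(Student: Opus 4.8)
The plan is to reduce the estimation task to a channel-discrimination problem and then bound how well any adaptive, non-concatenating tree protocol can solve it. Since every $\Lambda_{(a,s)}$ in~\eqref{eq:construction} has Pauli eigenvalues valued in $\{-1,0,+1\}$, any estimator achieving $|\widehat\lambda_a-\lambda_a|<1/2$ can, with high probability, tell the completely depolarizing channel $\mc D(\cdot)=2^{-n}I\Tr(\cdot)$ (for which $\lambda_a=0$ for $a\neq0$) apart from a uniformly random $\Lambda_{(a,s)}$ (for which $\lambda_a=s=\pm1$). Hence a lower bound on the samples needed to win this ``one-versus-many'' discrimination task with constant advantage is also a lower bound for estimation.

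Next I would model the protocol as a decision tree of depth $N$: at round $t$ one prepares an $(n+k)$-qubit state $\rho_t$ and applies a POVM $\{E_t^j\}$ to $(\Lambda\otimes\mathds1)(\rho_t)$, with $\rho_t$ and $\{E_t^j\}$ allowed to depend on the earlier outcomes $j_1,\dots,j_{t-1}$. Following the information-theoretic strategy of~\cite{huang2021information}, I would control the likelihood ratio $\mc L^{(a,s)}$ of a leaf under $\Lambda_{(a,s)}$ relative to under $\mc D$; the discrimination advantage is then bounded by the fluctuation of the averaged ratio under the null, i.e.\ by $\E_{\mc D}[(\E_{(a,s)}\mc L^{(a,s)}-1)^2]$. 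As in Lemma~\ref{le:mutual1}, convexity lets me restrict to pure-state inputs and rank-$1$ POVMs, after which each round reduces to the single quantities $\beta_a^{(t)}(j)=\Tr(E_t^j(\delta_a\otimes\mathds1)(\rho_t))$, where $\delta_a(\cdot)=2^{-n}P_a\Tr(P_a(\cdot))$, together with the null outcome law $p^{\mc D,(t)}(j)$.

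Expanding the second moment produces, for each pair of hypotheses $(a,s),(a',s')$, a product over rounds of factors $1+ss'\,c^{(t)}_{a,a'}$ with per-round correlations $c^{(t)}_{a,a'}=\sum_j \beta_a^{(t)}(j)\beta_{a'}^{(t)}(j)/p^{\mc D,(t)}(j)$. Averaging the signs $s,s'$ annihilates the odd-order terms, so the surviving contribution is governed by the diagonal correlations $c^{(t)}_{a,a}$—whose average over $a$ is exactly the quantity bounded by $\le 2^k/(2^n-1)$ in Lemma~\ref{le:mutual1}—together with cross-round, off-diagonal correlations of the $\beta$'s. The goal is to show that this second moment stays $1+o(1)$, and hence the advantage stays below a constant, as long as $N=o(2^{(n-k)/3})$.

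The hard part will be this last step. Unlike the non-adaptive proof, where the chain rule of mutual information decouples the rounds exactly, here the adaptive branching forces me to bound correlations of $\beta_a^{(t)}$ across different rounds and across distinct Pauli labels $a\neq a'$; these terms do not follow from Lemma~\ref{le:mutual1} and must be estimated directly, using Pauli-twirl identities and the rank bound on $C_j=B_jA^\dagger$ as in the single-round analysis, but now summed over the whole tree. I expect only crude control to be available—worst-case diagonal terms $c^{(t)}_{a,a}$ can be exponentially larger than their $a$-average $\sim 2^{k-n}$—and it is precisely this looseness that degrades the tight non-adaptive exponent $2^{n-k}$ to the weaker $2^{(n-k)/3}$ claimed here. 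Assembling these bounds over all $N$ rounds while keeping the accumulated fluctuation $o(1)$ is the main technical obstacle.
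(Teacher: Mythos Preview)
Your reduction to a one-versus-many discrimination task is exactly what the paper does (the paper in fact only uses the $s=+1$ channels $\Lambda_a$, but this is a minor difference). Where your proposal diverges is in the technical engine: you set out to control the discrimination advantage via a second-moment (essentially a $\chi^2$) computation on the averaged likelihood ratio, and you correctly identify that the resulting cross-round and off-diagonal correlations are the obstacle. The paper avoids this obstacle entirely by bounding the total variation distance directly, without any second moment.

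The key idea you are missing is a ``good set'' argument. For any $2^n\times 2^n$ matrix $C$ of rank at most $2^k$, the sum-of-squares identity behind Lemma~\ref{le:mutual1} gives
\[
\sum_{a\neq 0}\left(\frac{\Tr(C^\dagger P_aCP_a)}{\Tr(C^\dagger C)}\right)^2\le 2^{k}\cdot 2^n,
\]
so by a Markov-type counting argument at most a $(2^k/(2^n-1))^{1/3}$-fraction of the $a$'s can have $|\Tr(C^\dagger P_aCP_a)/\Tr(C^\dagger C)|>(2^k/(2^n-1))^{1/3}$. Applying this at each of the $N$ rounds (with $C=C_{o_i}^{o_{<i}}$) and taking a union bound yields, for every outcome sequence $o_{1:N}$, a common ``good'' set $G^{(o_{1:N})}$ of Pauli labels of relative size at least $1-N(2^k/(2^n-1))^{1/3}$ on which every per-round factor in the likelihood ratio is within $(2^k/(2^n-1))^{1/3}$ of~$1$. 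One then splits the average over $a$ in the TV expression into $G^{(o_{1:N})}$ (each term small, using $1-(1-x)^N\le Nx$) and its complement (few terms, each bounded by~$1$), giving $\mathrm{TV}(p_1,p_2)\le 2N(2^k/(2^n-1))^{1/3}$ directly.

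This sidesteps precisely the difficulty you flagged: there is no need to control off-diagonal correlations $c^{(t)}_{a,a'}$ or to sum products of $\beta$'s across rounds, because the argument never expands a square of an $a$-average. The $1/3$ in the exponent arises not from looseness in a moment bound but from balancing the threshold in the good-set definition against the size of its complement. Your second-moment route may be completable, but as written it leaves the hardest step open; the good-set/TV argument closes it cleanly.
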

		
	\noindent Our proof techniques generalize the methods of Huang \textit{et al.}~\cite{huang2021information} for proving adaptive sample complexity lower bounds for Pauli expectation values estimation of unknown quantum states. 
	Consider the following $4^n$ possible Pauli channels 
	\begin{equation}
		\left\{
		\begin{aligned}
			\Lambda_\text{dep}(\cdot) &= \frac{1}{2^n}I\Tr(\cdot),\\ 
			\Lambda_a(\cdot) &= \frac{1}{2^n}\left(I\Tr(\cdot)+P_a\Tr(P_a(\cdot))\right),~\forall a\in\{1,\cdots,4^n-1\}.
		\end{aligned}
		\right.
	\end{equation}
	Here $\Lambda_\text{dep}$ is known as the completely deplorizing channel.
	If there exists an Pauli channel estimation protocol satisfying the requirement of Theorem~\ref{th:adalower}, one can unambiguously identify each one of the $4^n$ possible Pauli channels appearing above with high probability, given sufficient number of samples of $\Lambda$. This in turn implies that one should be able to distinguish the following two equal-probable hypotheses with high success probability.
	\begin{enumerate}
		\item Given $N$ copies of $\Lambda = \Lambda_\text{dep}$.
		\item Given $N$ copies of $\Lambda = \Lambda_{a}$ for a uniformly-randomly picked $a\in\{1,\cdots,4^n-1\}$.
	\end{enumerate}
	For an adaptive but non-concatenating protocol, one has to choose an $2^n\times 2^k$ dimensional input state and a POVM measurement for the $i$th sample of $\Lambda$, where the choice may depend on previous measurement outcomes. Denote the measurement outcome of the $i$th round as $o_i$. We explicitly write the state and measurement at the $i$th round as $\rho^{o_{< i}}$ and $\{E_{j}^{o_{< i}}\}_j$ to emphasize their dependence on $o_{<i}\coleq[o_1,...,o_{i-1}]$.Denote the measurement outcomes among all the $N$ samples as $o_{1:N}\coleq[o_1,\cdots,o_N]$. 
	The probability distribution of $o_{1:N}$ under the above two hypothesis can be expressed as
	\begin{equation}
		\left\{
		\begin{aligned}
			\text{Hypothesis 1:}\quad p_1(o_{1:N}) &= \prod_{i=1}^N\Tr\left( E_{o_i}^{o_{<i}}\Lambda_\text{dep}\otimes\mathds 1(\rho^{o_{<i}}) \right),\\
			\text{Hypothesis 2:}\quad p_2(o_{1:N}) &= \mathop{\mbb E}_{a\ne 0}\prod_{i=1}^N\Tr\left( E_{o_i}^{o_{<i}}\Lambda_a\otimes\mathds 1(\rho^{o_{<i}}) \right).
		\end{aligned}
		\right.
	\end{equation}
	The ability to distinguish these two hypotheses is equivalent to the ability to distinguish $p_1$ from $p_2$. The maximal success probability of distinguishing two probability distributions is given by $\frac12(1+\mr{TV}(p_1,p_2))$ where $\mr{TV}$ stands for the total variance distance defined as follows
	\begin{equation}
		\mr{TV}(p_1,p_2) \coleq  \sum_{\substack{
					o_{1:N}~\text{s.t.}\\
					p_1(o_{1:N})\ge p_2(o_{1:N})
				}}(p_1(o_{1:N}) - p_2(o_{1:N})).
	\end{equation}
	
	We will show in the following Lemma that $\mr{TV}(p_1,p_2)=\mc O(N2^{(k-n)/3})$, which immediately implies that one must have $N=\Omega(2^{(n-k)/3})$ in order to obtain high success probability in distinguishing $p_1$ from $p_2$. This then completes the proof of Theorem~\ref{th:adalower}.
	
	\begin{lemma}\label{le:TV}
		$\mathrm{TV}(p_1,p_2) \le 2N\left(\cfrac{2^k}{2^n-1}\right)^{1/3}$.
	\end{lemma}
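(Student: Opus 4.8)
The plan is to bound the total variation distance $\mathrm{TV}(p_1,p_2)$ directly and to extract the cube‑root exponent from an optimized truncation. Fix the adaptive strategy and write the single‑round likelihood ratios as $x_a^i\coleq\delta_a^{o_{<i}}(o_i)/p_1(o_i|o_{<i})$, where $\delta_a^{o_{<i}}(o_i)\coleq p_a(o_i|o_{<i})-p_1(o_i|o_{<i})$ is the round‑$i$ ``signal'' of $\Lambda_a$ relative to the depolarizing reference, and set $W_a^{(t)}\coleq\prod_{i\le t}(1+x_a^i)=q_a(o_{\le t})/p_1(o_{\le t})$. Since each factor $1+x_a^i=p_a(o_i|o_{<i})/p_1(o_i|o_{<i})\ge 0$ is a ratio of probabilities, every $W_a^{(t)}$ is nonnegative and, under $p_1$, a martingale of mean $1$; hence $L_t\coleq\E_a[W_a^{(t)}]=p_2(o_{\le t})/p_1(o_{\le t})$ is the likelihood ratio and $\mathrm{TV}(p_1,p_2)=\tfrac12\E_{p_1}|L_N-1|$. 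The strategy is: (i) a per‑round information bound reusing Lemma~\ref{le:mutual1}; (ii) a chain‑rule reduction to per‑round contributions; (iii) a truncation that produces the $1/3$ exponent, summed over the $N$ rounds.

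For step (i) I would show that for every fixed history $o_{<i}$ the averaged chi‑square quantity obeys
\[\E_a\!\left[\gamma^{o_{<i}}_{a,a}\right]\le\frac{2^k}{2^n-1},\qquad \gamma^{o_{<i}}_{a,a}\coleq\E_{p_1}\!\big[(x_a^i)^2\,\big|\,o_{<i}\big]=\sum_{o_i}\frac{\delta_a^{o_{<i}}(o_i)^2}{p_1(o_i|o_{<i})}.\]
This follows from exactly the computation behind Lemma~\ref{le:mutual1}: restrict to pure‑state inputs and rank‑$1$ POVM elements $\{w_j2^{n+k}\ketbra{B_j}{B_j}\}$, set $C_j\coleq B_jA^\dagger$ of rank $\le 2^k$, note $p_1(j)=w_j2^k\Tr(C_j^\dagger C_j)$ and $\delta_a(j)=w_j2^k\Tr(P_aC_jP_aC_j^\dagger)$, and apply the rank‑based Cauchy–Schwarz step of Eq.~\eqref{eq:matrixCS} followed by Pauli twirling. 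The only change from Lemma~\ref{le:mutual1} is that the reference is $\Lambda_{\mathrm{dep}}$ rather than $\E_a[\Lambda_a]$, which leaves the bound intact.

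For step (ii), writing $\Delta_i\coleq L_i-L_{i-1}$ and using that $p_1$ factorizes over rounds together with the martingale identity $L_{i-1}=p_2(o_{<i})/p_1(o_{<i})$, a change of measure gives $\E_{p_1}|\Delta_i|=2\,\E_{o_{<i}\sim p_2}[d_i]$ and hence the chain‑rule bound
\[\mathrm{TV}(p_1,p_2)\le\sum_{i=1}^N\E_{o_{<i}\sim p_2}\!\left[d_i(o_{<i})\right],\qquad d_i\coleq\tfrac12\sum_{o_i}\big|\bar\delta^{o_{<i}}(o_i)\big|,\]
where $\bar\delta^{o_{<i}}=\E_{a\sim\pi_{o_{<i}}}[\delta_a^{o_{<i}}]$ and $\pi_{o_{<i}}(a)\propto W_a^{(i-1)}$ is the Bayesian posterior over $a$. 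By Cauchy–Schwarz $d_i\le\tfrac12\sqrt{\bar\gamma^{o_{<i}}}$ with $\bar\gamma=\E_{a,a'\sim\pi}[\gamma_{a,a'}]\le\E_{a\sim\pi}[\gamma_{a,a}]$, so controlling each round reduces to controlling the posterior‑weighted informativeness $\E_{a\sim\pi}[\gamma_{a,a}]$.

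For step (iii), the crux is that averaging over $o_{<i}\sim p_2$ couples the posterior weight $W_a^{(i-1)}$ (which $a$ is likely) to $\gamma^{o_{<i}}_{a,a}$ (how informative the adaptively‑chosen measurement is about $a$): an adaptive strategy can focus on the a‑posteriori likely $a$, so this correlation is positive and one \emph{cannot} simply replace $\pi$ by the uniform prior and invoke step~(i). I would tame it by truncating at a level $\theta$, bounding $\E_{a\sim\pi}[\gamma_{a,a}]\le\theta+\E_{a\sim\pi}\big[\gamma_{a,a}\mathds{1}[\gamma_{a,a}>\theta]\big]$ and controlling the tail through the pointwise estimate $\Pr_a[\gamma_{a,a}>\theta]\le\E_a[\gamma_{a,a}]/\theta\le(2^k/(2^n-1))/\theta$ from step~(i) and Markov, combined with the Bayes identity $\E_{o_{<i}\sim p_2}\E_{a\sim\pi}[f]=\E_{a\sim\mathrm{unif}}\E_{o_{<i}\sim q_a}[f]$; optimizing $\theta$ then balances a $\sqrt\theta$ term against the tail and yields $\E_{p_2}[d_i]=\mc O\big((2^k/(2^n-1))^{1/3}\big)$, so summing the $N$ rounds gives Lemma~\ref{le:TV}. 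I expect this last step to be the main obstacle: the entire difficulty of the adaptive setting is precisely this posterior–informativeness correlation, and making the truncation tight enough to produce the $1/3$ exponent, rather than a lossy power or a spurious dimension factor from the crude bound $\gamma_{a,a}\le 2^k$, is the delicate part. Once Lemma~\ref{le:TV} is established, substituting it into the optimal distinguishing probability $\tfrac12\big(1+\mathrm{TV}(p_1,p_2)\big)$ and demanding constant success forces $N=\Omega(2^{(n-k)/3})$, completing Theorem~\ref{th:adalower}.
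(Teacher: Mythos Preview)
Your steps (i) and (ii) are sound, but step (iii) has a real gap that your own caveat anticipates. The Markov bound from step~(i) controls $\Pr_{a\sim\mathrm{unif}}[\gamma_{a,a}^{o_{<i}}>\theta]\le\epsilon/\theta$ (with $\epsilon=2^k/(2^n-1)$) for each \emph{fixed} history $o_{<i}$, but after the Bayes identity the tail you must bound is $\E_{o_{<i}\sim p_2}\Pr_{a\sim\pi_{o_{<i}}}[\gamma_{a,a}^{o_{<i}}>\theta]=\E_{a\sim\mathrm{unif}}\Pr_{o_{<i}\sim q_a}[\gamma_{a,a}^{o_{<i}}>\theta]$, a posterior- (equivalently $q_a$-) weighted quantity. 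Nothing in your argument converts the uniform bound into a posterior one, and an adaptive adversary can arrange that under $q_a$ the history lands with high probability on measurements for which $\gamma_{a,a}^{o_{<i}}$ is large---precisely the ``posterior--informativeness correlation'' you flag---so the tail is not $\mc O(\epsilon/\theta)$ and the $\sqrt\theta$--vs--tail balance that would yield $\epsilon^{1/3}$ does not close. The chain-rule decomposition is itself the source of the trouble: by introducing the posterior you have traded a tractable uniform average for an intractable adaptive one.

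The paper's argument avoids the posterior entirely by keeping the uniform $a$-average on the outside and working directly with the product form $\mathrm{TV}(p_1,p_2)=\E_{a}\sum_{o_{1:N}:p_1\ge p_2}p_1(o_{1:N})\big(1-\prod_i(1+x_a^i)\big)$, where $x_a^i=\Tr(C_{o_i}^{o_{<i}\dagger}P_aC_{o_i}^{o_{<i}}P_a)/\Tr(C_{o_i}^{o_{<i}\dagger}C_{o_i}^{o_{<i}})$. For each \emph{fixed} outcome sequence $o_{1:N}$ it defines the good set $G^{(o_{1:N})}=\{a:|x_a^i|\le\epsilon^{1/3}\ \forall i\}$; the second-moment bound of step~(i) plus a union bound over the $N$ rounds gives $|G^{(o_{1:N})}|\ge(1-N\epsilon^{1/3})(4^n-1)$. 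For $a\in G^{(o_{1:N})}$ the product is at least $(1-\epsilon^{1/3})^N$, contributing $\le N\epsilon^{1/3}$; for $a\notin G^{(o_{1:N})}$ the term is $\le 1$ but there are at most $N\epsilon^{1/3}(4^n-1)$ such $a$. Because $G^{(o_{1:N})}$ depends only on $o_{1:N}$, this split is compatible with the uniform $a$-average, and summing gives $2N\epsilon^{1/3}$ with no posterior in sight. In short: truncate in $a$ at fixed $o_{1:N}$, where the uniform Markov bound applies directly, rather than round-by-round under the posterior.
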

	
	\begin{proof}
		First notice that it suffices to consider pure state input and rank-$1$ POVM measurement. 
		For the former, the probability distribution obtained from mixed state input can be viewed as a convex combination of distributions obtained from pure state input. Thanks to the joint convexity, mixed state input can not yield a larger total variance distance;
		For the latter, every POVM measurement can be viewed as a coarse-graining of some rank-$1$ POVM measurement. Because of the data-processing property, this coarse-graining would not yield a larger total variance distance.

		In light of this observation, we can without loss of generality let the input state at the $i$th round be $\ket{A^{o_{<i}}}$ and let the POVM measurement be $\{w_{{o_i}}^{o_{<i}}2^{n+k}\ketbra{B_{{o_i}}^{o_{<i}}}{B_{{o_i}}^{o_{<i}}}\}_{o_i}$, conditioned on previous measurement outcomes $o_{<i}$, where $\ket{A^{o_{<i}}},\ket{B_{{o_i}}^{o_{<i}}}\in\mbb C^{2^n\times 2^k}$ are unit vectors, and $\sum_{o_i} w_{{o_i}}^{o_{<i}} = 1$ by normalization. 
		We also introduce the two $2^n\times 2^k$ matrices $A^{o_{<i}},~B_{{o_i}}^{o_{<i}}$ defined similarly as in Eq.~\eqref{eq:matrixAB}, and define $C^{o_{<i}}_{{o_i}}\coleq B^{o_{<i}}_{{o_i}}A^{o_{<i}\dagger}$ which is a $2^n\times 2^n$ matrix of rank less or equal to $2^k$.
		With the above definitions, one can verify that $p_1$ and $p_2$ can be expressed as follows
		\begin{equation}
			\begin{aligned}
				p_1(o_{1:N}) &= \prod_{i=1}^N w_{o_i}^{o_{<i}}2^{n+k} \bra{B_{o_i}^{o_{<i}}}\Lambda_\text{dep}\otimes\mathds 1\left((\ketbra{A^{o_{<i}}}{A^{o_{<i}}}\right)\ket{B_{o_i}^{o_{<i}}} ,\\
				&=  \prod_{i=1}^N w_{o_i}^{o_{<i}}2^{k}\Tr\left( C_{o_i}^{o_{<i}\dagger}C_{o_i}^{o_{<i}} \right) \\
				p_2(o_{1:N}) &= \mathop{\mbb{E}}_{a\ne 0}\prod_{i=1}^N w_{o_i}^{o_{<i}}2^{n+k} \bra{B_{o_i}^{o_{<i}}}\Lambda_a\otimes\mathds 1\left((\ketbra{A^{o_{<i}}}{A^{o_{<i}}}\right)\ket{B_{o_i}^{o_{<i}}}\\
				&=\mathop{\mbb{E}}_{a\ne 0}\prod_{i=1}^N w_{o_i}^{o_{<i}}2^{k}\left(\Tr\left( C_{o_i}^{o_{<i}\dagger}C_{o_i}^{o_{<i}} \right) + \Tr\left( C_{o_i}^{o_{<i}\dagger}P_a C_{o_i}^{o_{<i}} P_a\right) \right).
			\end{aligned}
		\end{equation}
		The total variance between $p_1$ and $p_2$ can then be bounded as
		\begin{equation}\label{eq:TV1}
			\begin{aligned}
				\mr{TV}(p_1,p_2) &= \mathop{{}	\mbb{E}}_{a\ne 0} \sum_{\substack{
					o_{1:N}~\text{s.t.}\\
					p_1(o_{1:N})\ge p_2(o_{1:N})
				}}
				\left( \prod_{i=1}^N w_{o_i}^{o_{<i}}2^k\Tr\left( C_{o_i}^{o_{<i}\dagger}C_{o_i}^{o_{<i}} \right) \right)
				\left( 1-\prod_{i=1}^N\left(
					1+\frac{\Tr\left( C_{o_i}^{o_{<i}\dagger}P_a C_{o_i}^{o_{<i}} P_a\right) }{\Tr\left( C_{o_i}^{o_{<i}\dagger} C_{o_i}^{o_{<i}} \right) }
				\right) \right)  \\
				& = \mathop{\mbb{E}}_{a\ne 0}\sum_{\substack{
					o_{1:N}~\text{s.t.}\\
					p_1(o_{1:N})\ge p_2(o_{1:N})
				}}
				p_1(o_{1:N})
				\left( 1-\prod_{i=1}^N\left(
					1+\frac{\Tr\left( C_{o_i}^{o_{<i}\dagger}P_a C_{o_i}^{o_{<i}} P_a\right) }{\Tr\left( C_{o_i}^{o_{<i}\dagger} C_{o_i}^{o_{<i}} \right) }
				\right) \right),
			\end{aligned}
		\end{equation}
		In order to bound the R.H.S., we make use of a technique from Huang~\textit{et al.}~\cite{huang2021information}.
		Let $C$ denote an arbitrary $2^n\times 2^n$ complex matrix of rank no more than $2^k$, consider the following subset of $n$-qubit Pauli operators
		\begin{equation}\label{eq:Gsize}
			G \coleq \left\{ a\in\{1,\cdots,4^n-1\}~:~ \left|
			\frac{\Tr\left( C^\dagger P_a C P_a\right) }{\Tr\left( C^\dagger C \right) }
			\right| \le \left(\frac{2^k}{2^n-1}\right)^{1/3}\right\}.
		\end{equation} 
		We claim that the size of $G$ satisfies
		\begin{equation}
			\left|G\right| \ge \left( 1- \left(\frac{2^k}{2^n-1}\right)^{1/3}\right)(4^n-1),
		\end{equation}
		which can be shown by contradiction: Suppose this does not hold, we would have
		\begin{equation}
			\begin{aligned}
				\sum_{a=1}^{4^n-1} \left(\frac{\Tr\left( C^\dagger P_a C P_a\right) }{\Tr\left( C^\dagger C \right) }\right)^2 
				&\ge  \left(\frac{2^k}{2^n-1}\right)^{2/3}\times \left( 4^n-1- \left|G\right| \right)\\
				&>  \left(\frac{2^k}{2^n-1}\right)^{2/3}\times \left(\frac{2^k}{2^n-1}\right)^{1/3}(4^n-1)\\
				&= 2^k(2^n+1).
			\end{aligned}
		\end{equation}
		However, the L.H.S. of the above can be upper bounded as
		\begin{equation}
			\begin{aligned}
				\sum_{a=1}^{4^n-1} \left(\frac{\Tr\left( C^\dagger P_a C P_a\right) }{\Tr\left( C^\dagger C \right) }\right)^2 
				&\le \sum_{a=1}^{4^n-1} \frac{2^k\Tr(CC^\dagger P_a CC^\dagger P_a)}{\Tr^2\left( C^\dagger C \right)}\\
				&= 2^k\frac{2^n\Tr^2\left( C^\dagger C \right) - \Tr(C^\dagger CC^\dagger C)}{\Tr^2\left( C^\dagger C \right)}\\
				&\le 2^{k}2^n,
			\end{aligned}
		\end{equation}
		where the first inequality follows from Cauchy-Schwarz (see Eq.~\eqref{eq:matrixCS}), and the first equality evaluates the Pauli twirling (see Eq.~\eqref{eq:mutualbound}). This leads to contradition, and hence proves the desired lower bound on $|G|$. 
		
		Now, we define another subset of $n$-qubit Pauli operators, conditioned on the measurement outcomes $o_{1:N}$, as follows
		\begin{equation}
			G^{(o_{1:N})} \coleq \left\{ a\in\{1,\cdots,4^n-1\}~:~ \left|
			\frac{\Tr\left( C_{o_i}^{o_{<i}\dagger}P_a C_{o_i}^{o_{<i}} P_a\right) }{\Tr\left( C_{o_i}^{o_{<i}\dagger} C_{o_i}^{o_{<i}} \right)}
			\right| \le \left(\frac{2^k}{2^n-1}\right)^{1/3},~\forall i = 1,\cdots,N\right\}.
		\end{equation}
		By applying a ``union bound'' on Eq.~\eqref{eq:Gsize}, we immediately have the following lower bound on the size of $G^{(o_{1:N})}$:
		\begin{equation}\label{eq:Gsize2}
			\left|G^{(o_{1:N})}\right| \ge  \left( 1-N \left(\frac{2^k}{2^n-1}\right)^{1/3}\right)(4^n-1).
		\end{equation}
	
		We are now ready to upper bound $\mr{TV}(p_1,p_2)$ from Eq.~\eqref{eq:TV1}. The strategy is to divide the sum over all Pauli operators into $G^{(o_{1:N})}$ and ${\sf P}^n\backslash G^{(o_{1:N})}$. All terms within the former group are small thanks to the definition of $G^{(o_{1:N})}$; Terms within the latter group could be large, but the total number of them are small. Combining these two gives a pretty good upper bound on $\mr{TV}(p_1,p_2)$. 
		In math,
		\begin{equation}
			\begin{aligned}
				\mr{TV}(p_1,p_2) &= \frac{1}{4^n-1}\sum_{\substack{
					o_{1:N}~\text{s.t.}\\
					p_1(o_{1:N})\ge p_2(o_{1:N})
				}}
				p_1(o_{1:N})
				\left(
					\sum_{a\in  G^{(o_{1:N})}} + 
					\sum_{\substack{
						a\in  \mbb Z^{2n}_2\backslash G^{(o_{1:N})},\\
					a\ne 0
					}}
				\right)
				\left( 1-\prod_{i=1}^N\left(
					1+\frac{\Tr\left( C_{o_i}^{o_{<i}\dagger}P_a C_{o_i}^{o_{<i}} P_a\right) }{\Tr\left( C_{o_i}^{o_{<i}\dagger} C_{o_i}^{o_{<i}} \right) }
				\right) \right)\\
				&\le \frac{1}{4^n-1}\sum_{\substack{
					o_{1:N}~\text{s.t.}\\
					p_1(o_{1:N})\ge p_2(o_{1:N})
				}}p_1(o_{1:N})
				\left(
					\left|G^{(o_{1:N})}\right|\times\left( 1-\left(1- \left(\frac{2^k}{2^n-1}\right)^{1/3}  \right)^N \right)
					+
					\left(4^n-1-\left|G^{(o_{1:N})}\right|\right)
				\right)\\
				&\le \sum_{\substack{
					o_{1:N}~\text{s.t.}\\
					p_1(o_{1:N})\ge p_2(o_{1:N})
				}}p_1(o_{1:N})
				\left(
					N\left( \frac{2^k}{2^n-1} \right)^{1/3} + 
					N\left( \frac{2^k}{2^n-1} \right)^{1/3}
				\right)\\
				& \le 2N\left( \frac{2^k}{2^n-1} \right)^{1/3}.
			\end{aligned}
		\end{equation}
		The first inequality uses an additional fact that $\left| \Tr(C^\dagger P_aCP_a) \right| / \left| \Tr(C^\dagger C) \right| \le 1$ to bound the second sum, which follows from the Cauchy-Schwarz inequality. 
		The second inequality uses the bounds $\left| G^{(o_{1:N})} \right| \le 4^n - 1$ for the first sum and Eq.~\eqref{eq:Gsize2} for the second sum, as well as the fact that $1-(1-x)^N\le Nx$ for all $0\le x\le 1$. (Note that $2^k/(2^n-1)\le 1$ only if $k\le n-1$, but our targeted upper bound trivially holds for $k=n$.) This completes the proof of Lemma~\ref{le:TV}.
	
	\end{proof}

	\subsection{A lower bound for the most general ancilla-free strategies}

	The lower bounds in the previous sections work for non-concatenating strategies, where one is not allowed to concatenate (or, coherently access) multiple copies of the unknwon channel before doing a single measurement. 
	The concatenating strategies, as depicted in Fig.~\ref{fig:model}, are however a natural apparatus for many randomized benchmarking protocols, where one effectively concatenate a varying number of noise channels in order to measure a series of exponentially-decaying values, and then extract the parameters of interest via fitting the decay rate.
	The original purpose of such concatenation in these protocols is to eliminate the effect of state-preparation-and-measurement (SPAM) error. Here, we want to understand whether concatenating strategies also provide a sample complexity advantage. The short answer is no. We will present a sample complexity lower bound of $\Omega(2^{n/3})$ for the most powerful (adaptive, concatenating) ancilla-free ($k=0$) Pauli channel estimation schemes. This result justifies our claim that ancillary systems are indeed indispensable to overcome the exponential barrier in sample complexity. 
	
	To start with, we give a rigorous definition of the most general ancilla-free strategies that we are going to study
	\begin{definition}\label{de:no_ancilla}
		Let $\Lambda$ be an unknwon $n$-qubit Pauli channel.
		An adaptive, concatenating, ancilla-free $(k=0)$ estimation protocol is specified by the following parameters. Let $N$ denote the total rounds of measurements. For the $i$th round, let $\rho^{o_{<i}}$ denote the input state, $\{E_{o_i}^{o_{<i}}\}_{o_i}$ denote the POVM measurement, $M^{o_{<i}}$ denote the length of concatenation, and $\{\mc C_k^{o_{<i}}\}_{k=1}^{M^{o_{<i}}}$ denote a set of processing channels. The $i$th measurement outcome is given by $o_i$ with probability
		\begin{equation}
			\Pr(o_i|o_{<i}) = \tr\left[E_{o_i}^{o_{<i}} \Lambda(\mc C_{M^{o_{<i}}-1}^{o_{<i}} ( \cdots  
			\mc C^{o_{<i}}_2(\Lambda(\mc C^{o_{<i}}_1(\Lambda(\rho^{o_{<i}}))))
			\cdots )  ) \right].
		\end{equation}
		All the superscript $o_{<i}\coleq [o_1,\cdots,o_{i-1}]$ are used to emphasize the dependence on previous measurement outcomes. The protocol should produce an estimate of $\Lambda$ via classical processing on the measurement outcomes $o_{1:N}$.
	\end{definition}
	
	The problem we are interested in is still approximating the Pauli eigenvalues $\bm \lambda$ to small error in $l_\infty$ distance. Note that, the parameter $N$ in the above definition is not exactly the sample complexity but is the total number of measurements conducted. Since one is allowed to concatenate multiple copies of $\Lambda$ in a single measurement, $N$ is a lower bound for the sample complexity of $\Lambda$. Our result is summarized in the following theorem.
	
	\begin{theorem}\label{th:no_ancilla}
		For any adaptive, concatenating, ancilla-free $(k=0)$ protocols that gives an estimate $\widehat{\bm{\lambda}}$ of the Pauli eigenvalues $\bm\lambda$ of an arbitrary unknown $n$-qubit Pauli channel $\Lambda$ such that
		\begin{equation}
			|\widehat\lambda_a - {\lambda_a}|< \frac12,\quad \forall a\in\mbb Z_2^{2n}
		\end{equation}
		holds with high probability, the rounds of measurements $N$ (and hence the number of samples of $\Lambda$) required is at least $\Omega(2^{n/3})$.
	\end{theorem}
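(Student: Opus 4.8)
The plan is to follow the reduction and total-variation analysis developed for Theorem~\ref{th:adalower}, upgrading it so that channel concatenation is accounted for. As before, I would argue that any protocol meeting the precision requirement can, with high probability, distinguish the completely depolarizing channel $\Lambda_\text{dep}$ (Hypothesis~1) from $\Lambda_a$ for a uniformly random $a\neq 0$ (Hypothesis~2), so it suffices to upper bound the total variation distance $\mr{TV}(p_1,p_2)$ between the distributions these two hypotheses induce on the measurement record $o_{1:N}$. Restricting to pure-state inputs and rank-$1$ POVMs by convexity and data-processing exactly as in Lemma~\ref{le:TV}, I would write the round-$i$ POVM element as $w_{o_i}^{o_{<i}}2^n\ketbra{B_{o_i}^{o_{<i}}}{B_{o_i}^{o_{<i}}}$ with $\ket{B_{o_i}^{o_{<i}}}\in\mbb C^{2^n}$ a unit vector (here $k=0$, so there is no ancilla), and $\sum_{o_i}w_{o_i}^{o_{<i}}=1$.

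The crux, and the only place where concatenation enters, is a structural observation that neutralizes it. Let $\tau$ denote the state fed into the \emph{final} application of $\Lambda$ within round $i$; being the output of CPTP maps (the earlier $\Lambda$'s and processing channels) applied to a valid input, $\tau$ is itself a density operator. Under Hypothesis~1 the final $\Lambda_\text{dep}$ sends $\tau\mapsto \frac{1}{2^n}I$ regardless of the preceding processing and intermediate applications; under Hypothesis~2 the final $\Lambda_a$ sends $\tau\mapsto\frac{1}{2^n}\bigl(I+\beta_a^{o_{<i}}P_a\bigr)$ with $\beta_a^{o_{<i}}=\Tr(P_a\tau)$. Since $\|P_a\|_\infty=1$ and $\tau$ is a density operator, $|\beta_a^{o_{<i}}|\le 1$; equivalently, positivity of the output operator forces its eigenvalues $\frac{1}{2^n}(1\pm\beta_a^{o_{<i}})$ to be nonnegative. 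Thus, no matter how deep the concatenation, the state reaching the final POVM lies in the two-dimensional operator span $\{I,P_a\}$ with a $P_a$-weight bounded by $1$, which is precisely the constraint available in the non-concatenating setting. This is the step I expect to be the conceptual heart of the argument: it formalizes why coherently accessing many copies of $\Lambda$ within a single measurement cannot amplify the signal, because $\Lambda_a$ collapses everything onto $\mathrm{span}\{I,P_a\}$ and physicality caps the surviving amplitude.

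With this in hand, the per-round likelihoods become $p_1(o_i|o_{<i})=w_{o_i}^{o_{<i}}$ and $p_2(o_i|o_{<i},a)=w_{o_i}^{o_{<i}}\bigl(1+\beta_a^{o_{<i}}\bra{B_{o_i}^{o_{<i}}}P_a\ket{B_{o_i}^{o_{<i}}}\bigr)$, which take exactly the form appearing in Lemma~\ref{le:TV} once the ratio $\Tr(C^\dagger P_a C P_a)/\Tr(C^\dagger C)$ is replaced by $\beta_a^{o_{<i}}\bra{B_{o_i}^{o_{<i}}}P_a\ket{B_{o_i}^{o_{<i}}}$. The two facts that the earlier proof relied on both survive the substitution: the magnitude of this quantity is at most $1$ (from $|\beta_a^{o_{<i}}|\le1$ and $|\bra{B}P_a\ket{B}|\le1$), and its second moment over $a\neq0$ is controlled by Pauli completeness, $\sum_{a=1}^{4^n-1}\bra{B}P_a\ket{B}^2=2^n-1$, giving $\frac{1}{4^n-1}\sum_{a\neq0}(\beta_a^{o_{<i}}\bra{B}P_a\ket{B})^2\le\frac{2^n-1}{4^n-1}$. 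I would then define the good Pauli set $G^{(o_{1:N})}$ of indices for which every round-$i$ quantity is at most $\bigl(1/(2^n-1)\bigr)^{1/3}$ in magnitude, lower bound its size by the same second-moment-plus-union-bound argument, and split the sum over $a$ into $G^{(o_{1:N})}$ and its complement exactly as in Lemma~\ref{le:TV}, obtaining $\mr{TV}(p_1,p_2)\le 2N\bigl(1/(2^n-1)\bigr)^{1/3}$. Demanding that this be $\Omega(1)$ for reliable discrimination yields $N=\Omega(2^{n/3})$, and since $N$ lower bounds the number of samples, the theorem follows. The only remaining work is the bookkeeping verifying that the adaptive dependence on $o_{<i}$ propagates correctly through the product-over-rounds structure, which is identical to the non-concatenating case and requires no new idea.
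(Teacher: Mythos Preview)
Your proposal is correct and follows essentially the same route as the paper's proof: reduce to the two-hypothesis discrimination, restrict to pure inputs and rank-$1$ POVMs, observe that under $\Lambda_a$ the state reaching the final POVM lies in $\mathrm{span}\{I,P_a\}$ with $P_a$-coefficient bounded by $1$, and then run the good-set/union-bound total-variation analysis from Lemma~\ref{le:TV}. The paper packages the concatenation step slightly differently---it introduces a recursive sequence $\xi_a^{o_{<i}}[m]$ tracking the $P_a$-coefficient through the $m$ layers and proves $|\xi_a^{o_{<i}}[m]|\le 1$ by induction---but this is exactly equivalent to your direct observation that the pre-final state $\tau$ is a density operator so $|\Tr(P_a\tau)|\le 1$; your formulation is arguably cleaner. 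The only cosmetic discrepancy is the threshold constant (the paper uses $(1/(2^n+1))^{1/3}$ from $\sum_{a\neq 0}\bra{B}P_a\ket{B}^2=2^n-1=(4^n-1)/(2^n+1)$, you wrote $(1/(2^n-1))^{1/3}$), which does not affect the $\Omega(2^{n/3})$ conclusion.
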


	\begin{proof}
		The proof methods are similar to the proof of Theorem~\ref{th:adalower}. Define the following Pauli channels.
		\begin{equation}
			\left\{
			\begin{aligned}
				\Lambda_\text{dep}(\cdot) &= \frac{1}{2^n}I\Tr(\cdot),\\ 
				\Lambda_a(\cdot) &= \frac{1}{2^n}\left(I\Tr(\cdot)+P_a\Tr(P_a(\cdot))\right),~\forall a\in\{1,\cdots,4^n-1\}.
			\end{aligned}
			\right.
		\end{equation}
		We consider the problem of distinguishing the following two equal-probable hypotheses
		\begin{enumerate}
			\item Given $N$ copies of $\Lambda = \Lambda_\text{dep}$.
			\item Given $N$ copies of $\Lambda = \Lambda_{a}$ for a uniformly-randomly picked $a\in\{1,\cdots,4^n-1\}$.
		\end{enumerate}
		An estimation protocol satisfying our assumptions should be able to distinguish these two hypotheses with high probability. Let the probability distribution of the measurement outcomes $o_{1:N}$ under these two hypotheses be $p_1$ and $p_2$, respectively. The total variance distance $\mr{TV}(p_1,p_2)$ must be at least $\Omega(1)$ for the distinguishing task to succeed with high probability. We will show in the following that $\mr{TV}(p_1,p_2) = O(N2^{-n/3})$, which then gives the claimed lower bound of $N = \Omega(2^{n/3})$.
	
		To start with, based on the same argument as in the proof of Lemma~\ref{le:TV}, it suffices to consider pure state input and rank-$1$ POVM measurements, so we replace $\rho^{o_{<i}}$ and $\{E_{o_i}^{o_{<i}}\}_{o_i}$ in Definition~\ref{de:no_ancilla} with $\ketbra{A^{o_{<i}}}{A^{o_{<i}}}$ and $\{w^{o_{<i}}_{o_i}2^n \ketbra{B^{o_{<i}}}{B^{o_{<i}}} \}_{o_i}$ respectively, where $\ket{A^{o_{<i}}},\ket{B_{{o_i}}^{o_{<i}}}\in\mbb C^{2^n}$ are unit vectors, and $\sum_{o_i} w_{{o_i}}^{o_{<i}} = 1$ by normalization.
		
		Next, we calculate the distribution of $o_{1:N}$ under the two different hypotheses. 
		The expression for $p_1$ can be easily obtained, as $\Lambda_\text{dep}$ is simply the completely deplorizing channel. 
		We have
		\begin{equation}
		\begin{aligned}
			p_1(o_{1:N}) &= \prod_{i=1}^N 
			w_{o_i}^{o_{<i}}2^n\bra{B_{o_i}^{o_{<i}}} \Lambda_\mr{dep}(\mc C_{M^{o_{<i}}-1}^{o_{<i}} ( \cdots  
			\mc C^{o_{<i}}_2(\Lambda_\mr{dep}(\mc C^{o_{<i}}_1(\Lambda_\mr{dep}(
			\ketbra{A^{o_{<i}}}{A^{o_{<i}}}
			))))
			\cdots )  ) \ket{B_{o_i}^{o_{<i}}} \\
			&= \prod_{i=1}^{N}w_{o_i}^{o_{<i}}.
		\end{aligned}
		\end{equation}
		The expression for $p_2$ is more complicated. We first define the following recursive expression
		\begin{equation}
			\xi_a^{o_{<i}}[m]\coleq\left\{
			\begin{aligned}
				&2^{-n}\Tr\left(P_a\mc C_{m-1}(I+P_a\xi_a^{o_{<i}}[m-1])\right),\quad 2\le m\le M^{o_{<i}},\\
				&\bra{A^{o_{<i}}} P_a \ket{A^{o_{<i}}},\quad m=1.
			\end{aligned}
			\right.
		\end{equation}
		The expression for $p_2$ can then be calculated as follows
		\begin{equation}
			\begin{aligned}
				p_2(o_{1:N}) &= \mathop{\mbb E}_{a\ne 0}\prod_{i=1}^N 
				w_{o_i}^{o_{<i}}2^n\bra{B_{o_i}^{o_{<i}}} \Lambda_a(\mc C_{M^{o_{<i}}-1}^{o_{<i}} ( \cdots  
				\mc C^{o_{<i}}_2(\Lambda_a(\mc C^{o_{<i}}_1(\Lambda_a(
				\ketbra{A^{o_{<i}}}{A^{o_{<i}}}
				))))
				\cdots )  ) \ket{B_{o_i}^{o_{<i}}}\\
				&= \mathop{\mbb E}_{a\ne 0}\prod_{i=1}^N w_{o_i}^{o_{<i}}
				\bra{B_{o_i}^{o_{<i}}} \Lambda_a(\mc C_{M^{o_{<i}}-1}^{o_{<i}} ( \cdots\mc C^{o_{<i}}_2(\Lambda_a(\mc C^{o_{<i}}_1
				(
					I+P_a\xi_a^{o_{<i}}[1]
				)
				))\cdots))\ket{B_{o_i}^{o_{<i}}}\\
				&= \mathop{\mbb E}_{a\ne 0}\prod_{i=1}^N w_{o_i}^{o_{<i}}
				\bra{B_{o_i}^{o_{<i}}} \Lambda_a(\mc C_{M^{o_{<i}}-1}^{o_{<i}} ( \cdots\mc C^{o_{<i}}_2
				(
					I + 2^{-n}P_a\Tr(P_a\mc C^{o_{<i}}_1(I + P_a\xi_a^{o_{<i}}[1])
				)
				\cdots))\ket{B_{o_i}^{o_{<i}}}\\
				&= \mathop{\mbb E}_{a\ne 0}\prod_{i=1}^N w_{o_i}^{o_{<i}}
				\bra{B_{o_i}^{o_{<i}}} \Lambda_a(\mc C_{M^{o_{<i}}-1}^{o_{<i}} ( \cdots\mc C^{o_{<i}}_2
				(
					I + P_a\xi_a^{o_{<i}}[2]
				)
				\cdots))\ket{B_{o_i}^{o_{<i}}}\\
				&= \cdots\\
				&= \mathop{\mbb E}_{a\ne 0}\prod_{i=1}^N w_{o_i}^{o_{<i}}
				\bra{B_{o_i}^{o_{<i}}} 
				I + P_a\xi_a^{o_{<i}}[M^{o_{<i}}]
				\ket{B_{o_i}^{o_{<i}}}\\
				&= \mathop{\mbb E}_{a\ne 0}\prod_{i=1}^N w_{o_i}^{o_{<i}}(1 + \xi_a^{o_{<i}}[M^{o_{<i}}] \bra{B_{o_i}^{o_{<i}}}P_a\ket{B_{o_i}^{o_{<i}}}).
			\end{aligned}
		\end{equation}
		The third line uses the fact that $\mc C_k^{o_{<i}}$ is trace-preserving. The total variance distance between $p_1$ and $p_2$ is then
		\begin{equation}\label{eq:TV1'}
		\begin{aligned}
			\mr{TV}(p_1,p_2) &= \mathop{\mbb E}_{a\ne 0}\sum_{\substack{
				o_{1:N}~\text{s.t.}\\
				p_1(o_{1:N})\ge p_2(o_{1:N})
			}}
			\left(\prod_{i=1}^{N}w_{o_i}^{o_{<i}}\right)\left(1-\prod_{i=1}^{N}\left(
				1 + \xi_a^{o_{<i}}[M^{o_{<i}}] \bra{B_{o_i}^{o_{<i}}}P_a\ket{B_{o_i}^{o_{<i}}}
			\right)\right)\\
			&= \mathop{\mbb E}_{a\ne 0}\sum_{\substack{
				o_{1:N}~\text{s.t.}\\
				p_1(o_{1:N})\ge p_2(o_{1:N})
			}}
			p_1(o_{1:N})\left(1-\prod_{i=1}^{N}\left(
				1 + \xi_a^{o_{<i}}[M^{o_{<i}}] \bra{B_{o_i}^{o_{<i}}}P_a\ket{B_{o_i}^{o_{<i}}}
			\right)\right).
		\end{aligned}
		\end{equation}
		We now need a bound of $|\xi_a^{o_{<i}}[M^{o_{<i}}]|\le 1$, which can be shown by induction. We see $|\xi_a^{o_{<i}}[1]| \le 1$ by definition. Suppose $|\xi_a^{o_{<i}}[m-1]| \le 1$, we have
		\begin{equation}
			\begin{aligned}
				|\xi_a^{o_{<i}}[m]| &= \left|\Tr\left(P_a \mc C^{o_{<i}}_{m-1}\left(\frac{I +P_a\xi_a^{o_{<i}}[m-1]}{2^n}\right)\right)\right|\\
				&\le \|P_a\|_\infty \Tr \left|  \mc C^{o_{<i}}_{m-1}\left(\frac{I +P_a\xi_a^{o_{<i}}[m-1]}{2^n}\right)  \right|\\
				&=\Tr\left(\frac{I +P_a\xi_a^{o_{<i}}[m-1]}{2^n}\right)\\
				&=1.
			\end{aligned}
		\end{equation}
		The first line is by the defining recursive expression; The second line is by the tracial matrix H{\"o}lder inequality; The third line uses the fact that $\mc C_{m-1}^{o_{<i}}$ is a positive map, and that ${2^{-n}}\left({I +P_a\xi_a^{o_{<i}}[m-1]}\right)$ is positive semidefinite thanks to the induction hypothesis $|\xi_a^{o_{<i}}[m-1]|\le1$. Thus we can remove the modulus within the trace, and also remove $\mc C_{m-1}^{o_{<i}}$ as it is trace-preserving. By induction, we've shown $|\xi_a^{o_{<i}}[M^{o_{<i}}]| \le 1$.

		The remaining part of bounding $\mr{TV}(p_1,p_2)$ is basically the same as in the proof of Lemma~\ref{le:TV}. We repeat it here for completeness.
		Let $\ket B$ be any $n$-qubit pure state. Consider the following subset of $n$-qubit Pauli operators
		\begin{equation}\label{eq:Gsize'}
			G \coleq \left\{ a\in\{1,\cdots,4^n-1\}~:~ \left|
			\bra{B}P_a\ket{B}
			\right| \le \left(\frac{1}{2^n+1}\right)^{1/3}\right\}.
		\end{equation} 
		We claim that the size of $G$ satisfies
		\begin{equation}
			\left|G\right| \ge \left( 1- \left(\frac{1}{2^n+1}\right)^{1/3}\right)(4^n-1),
		\end{equation}
		which can be shown by contradiction: Suppose this does not hold, we would have
		\begin{equation}
			\begin{aligned}
				\sum_{a=1}^{4^n-1} \bra B P_a \ket B^2
				&\ge  \left(\frac{1}{2^n+1}\right)^{2/3}\times \left( 4^n-1- \left|G\right| \right)\\
				&>  \left(\frac{1}{2^n+1}\right)^{2/3}\times \left(\frac{1}{2^n+1}\right)^{1/3}(4^n-1)\\
				&= 2^n-1.
			\end{aligned}
		\end{equation}
		However, the L.H.S. of the above can be calculated as
		\begin{equation}
			\begin{aligned}
				\sum_{a=1}^{4^n-1} \bra B P_a \ket B^2
				= \sum_{a=0}^{4^n-1} \bra B P_a \ket B^2 - 1
				= 2^n-1.
			\end{aligned}
		\end{equation}
		This leads to contradition, and hence proves the desired lower bound on $|G|$. 
		
		Now, we define another subset of $n$-qubit Pauli operators, conditioned on the measurement outcomes $o_{1:N}$, as follows
		\begin{equation}
			G^{(o_{1:N})} \coleq \left\{ a\in\{1,\cdots,4^n-1\}~:~ \left|
			\bra{B_{o_i}^{o_{<i}}}P_a\ket{B_{o_i}^{o_{<i}}}
			\right| \le \left(\frac{1}{2^n+1}\right)^{1/3},~\forall i = 1,\cdots,N\right\}.
		\end{equation}
		By applying a ``union bound'' on Eq.~\eqref{eq:Gsize'}, we immediately have the following lower bound on the size of $G^{(o_{1:N})}$:
		\begin{equation}\label{eq:Gsize2'}
			\left|G^{(o_{1:N})}\right| \ge  \left( 1-N \left(\frac{1}{2^n+1}\right)^{1/3}\right)(4^n-1).
		\end{equation}
		We are now ready to upper bound $\mr{TV}(p_1,p_2)$ from Eq.~\eqref{eq:TV1'}. The strategy is to divide the sum over all Pauli operators into $G^{(o_{1:N})}$ and ${\sf P}^n\backslash G^{(o_{1:N})}$. All terms within the former group are small thanks to the definition of $G^{(o_{1:N})}$; Terms within the latter group could be large, but the total number of them are small. Combining these two gives a pretty good upper bound on $\mr{TV}(p_1,p_2)$. 
		In math,
		\begin{equation}
			\begin{aligned}
				\mr{TV}(p_1,p_2) &= \frac{1}{4^n-1}\sum_{\substack{
					o_{1:N}~\text{s.t.}\\
					p_1(o_{1:N})\ge p_2(o_{1:N})
				}}
				p_1(o_{1:N})
				\left(
					\sum_{a\in  G^{(o_{1:N})}} + 
					\sum_{\substack{
						a\in  \mbb Z^{2n}_2\backslash G^{(o_{1:N})},\\
					a\ne 0
					}}
				\right)
				\left(1-\prod_{i=1}^{N}\left(
					1 + \xi_a^{o_{<i}}[M^{o_{<i}}] \bra{B_{o_i}^{o_{<i}}}P_a\ket{B_{o_i}^{o_{<i}}}
				\right)\right)\\
				&\le \frac{1}{4^n-1}\sum_{\substack{
					o_{1:N}~\text{s.t.}\\
					p_1(o_{1:N})\ge p_2(o_{1:N})
				}}p_1(o_{1:N})
				\left(
					\left|G^{(o_{1:N})}\right|\times\left( 1-\left(1- \left(\frac{1}{2^n+1}\right)^{1/3}  \right)^N \right)
					+
					\left(4^n-1-\left|G^{(o_{1:N})}\right|\right)
				\right)\\
				&\le \sum_{\substack{
					o_{1:N}~\text{s.t.}\\
					p_1(o_{1:N})\ge p_2(o_{1:N})
				}}p_1(o_{1:N})
				\left(
					N\left( \frac{1}{2^n+1} \right)^{1/3} + 
					N\left( \frac{1}{2^n+1} \right)^{1/3}
				\right)\\
				& \le 2N\left( \frac{1}{2^n+1} \right)^{1/3}.
			\end{aligned}
		\end{equation}
		The first inequality applies the bound $|\xi_a^{o_{<i}}[M^{o_{<i}}]|\le 1$. 
		Besides, the first sum uses the bound from the definition of $G^{(o_{1:N})}$,
		and the second sum is bounded using $\left| \bra{B_{o_i}^{o_{<i}}} P_a \ket{B_{o_i}^{o_{<i}}} \right| \le 1$. 
		The second inequality uses the bounds $\left| G^{(o_{1:N})} \right| \le 4^n - 1$ for the first sum and Eq.~\eqref{eq:Gsize2'} for the second sum, as well as the fact that $1-(1-x)^N\le Nx$ for all $0\le x\le 1$. 
		Now we have obtained the claimed bound $\mr{TV}(p_1,p_2) = O(N2^{-n/3})$, and hence complete the proof of Theorem~\ref{th:no_ancilla}.
	\end{proof}
	
	\subsection{A lower bound for the most general entangled strategies}
	
		In this section, we prove a lower bound of $\Omega(n)$ for the fully entangled estimation strategies, which is the most general measurement strategies one can do to learn an unknown channel even with the help of quantum computers, see Fig.~\ref{fig:model} (a).
	Note that, since we assume there is an unlimited amount of quantum memory, we can without loss of generality eliminate any intermediate measurements, and only conduct one joint measurement after sequentially processing all $N$ samples of the channel.
	\begin{theorem}\label{th:entlower}
		For any fully entangled measurement protocols that give an estimate $\widehat{\bm{\lambda}}$ for the Pauli eigenvalues $\bm\lambda$ of an arbitrary unknown $n$-qubit Pauli channel $\Lambda$ such that
		\begin{equation}
			|\widehat\lambda_a - {\lambda_a}|< \frac12,\quad \forall a\in\mbb Z_2^{2n}
		\end{equation}
		holds with high probability, the number of samples of $\Lambda$ required is at least $\Omega(n)$.
	\end{theorem}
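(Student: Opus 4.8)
The plan is to combine teleportation stretching with Holevo's bound, reusing the same codebook $\{\Lambda_{(a,s)}\}$ of Eq.~\eqref{eq:construction} as in the non-adaptive case. First I would note that the only extra power of the fully entangled model (Fig.~\ref{fig:model}(a)) over the non-concatenating one is the ability to route the $N$ uses of $\Lambda$ through an unbounded quantum memory, interleaving them with arbitrary processing channels before a single joint measurement. Since every Pauli channel is teleportation-covariant, I would invoke teleportation stretching~\cite{pirandola2017fundamental,pirandola2019fundamental} to replace each use of $\Lambda$ by its Choi state $J(\Lambda)$, so that the whole adaptive protocol becomes a \emph{fixed}, channel-independent quantum operation $\mathcal E$ applied to the tensor power $J(\Lambda)^{\otimes N}$, followed by one POVM. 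Because $\mathcal E$ does not depend on the unknown channel, the data-processing inequality bounds the information the protocol extracts by the information carried in $J(\Lambda)^{\otimes N}$ alone.

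Next I would fix the ensemble: let Alice pick $(a,s)$ uniformly from $\{1,\dots,4^n-1\}\times\{\pm1\}$ and hand Bob $N$ copies of $\Lambda_{(a,s)}$. Exactly as in the previous sections, any protocol meeting the hypothesis identifies $(a,s)$ with high probability (the eigenvalues lie in $\{-1,0,1\}$ and any two codewords differ by at least $1$ in some coordinate), so Fano's inequality forces the accessible information to be $\Omega(\log(2(4^n-1)))=\Omega(n)$. The key computation is the Choi state: since $J(\Lambda_{(a,s)})$ is diagonal in the Bell basis with eigenvalues equal to the Pauli error rates $p_c=\tfrac{1}{4^n}(1+s(-1)^{\expval{c,a}})$, exactly half of which equal $2\cdot 4^{-n}$ and half of which vanish, its von Neumann entropy is the Shannon entropy of $\{p_c\}$, giving $S(J(\Lambda_{(a,s)}))=2n-1$ for every $(a,s)$.

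Then I would apply Holevo's theorem~\cite{holevo1973bounds}. Writing $\rho_{(a,s)}\coleq J(\Lambda_{(a,s)})$, the accessible information of the ensemble $\{p(a,s),\rho_{(a,s)}^{\otimes N}\}$ is at most the Holevo quantity
\begin{equation}
\chi = S\Big(\mathop{\mbb{E}}_{(a,s)}\rho_{(a,s)}^{\otimes N}\Big) - \mathop{\mbb{E}}_{(a,s)} S\big(\rho_{(a,s)}^{\otimes N}\big) \le 2nN - N(2n-1) = N,
\end{equation}
where I bound the first term by the log-dimension $\log_2 2^{2nN}=2nN$ of the $N$-copy Choi space and use additivity of the entropy under tensor products for the second. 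Chaining the reductions yields $\Omega(n)\le I((a,s):\mathrm{outcome})\le\chi\le N$, hence $N=\Omega(n)$.

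I expect the teleportation-stretching step to be the main obstacle: I must verify that Pauli channels genuinely belong to the teleportation-covariant class and that the stretching produces \emph{exactly} $N$ Choi copies—rather than inflating the copy count—even when the protocol adaptively selects processing channels and memory registers between successive uses of $\Lambda$. The entropy and Holevo steps are then routine, and it is precisely the factor-of-one slack in $\chi\le N$ (the $2nN-N(2n-1)$ cancellation) that pins the final bound at $\Omega(n)$, matching the $\mc O(n)$ upper bound of Algorithm~\ref{alg:main} and confirming optimality of the $n$-qubit-ancilla protocol among all entangled strategies.
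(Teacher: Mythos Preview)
Your proposal is correct and follows essentially the same approach as the paper's proof: teleportation stretching reduces the fully entangled protocol to a single POVM on $J(\Lambda)^{\otimes N}$, Fano's inequality on the codebook $\{\Lambda_{(a,s)}\}$ forces $\Omega(n)$ bits of accessible information, and Holevo's bound with $S(J(\Lambda_{(a,s)}))=2n-1$ yields $\chi\le N$. The paper carries out exactly these steps, including the same $2nN-(2n-1)N=N$ cancellation.
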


	\begin{proof}
		We first show that, by using a technique known as \emph{teleportation stretching}~\cite{pirandola2017fundamental,pirandola2019fundamental}, any fully entangled measurement protocols for $N$ copies of an arbitrary Pauli channel $\Lambda$ can be simulated by a joint measurement on $N$ copies of the Choi state $J_\Lambda$ which is defined as
		\begin{equation}
		\begin{aligned}
			J_\Lambda \coleq&~ \Lambda\otimes\mathds 1(\ketbra{\Psi^+}{\Psi^+})\\
			=&~ \frac{1}{4^n}\sum_{a\in\mbb Z_{2}^{2n}}\lambda_a P_a\otimes P_a^T.
		\end{aligned}
		\end{equation}
		This follows from the existence of a quantum channel $\mc T$ such that $\Lambda(\rho) = \mc T(\rho\otimes J_\Lambda)$ holds for all Pauli channels $\Lambda$. 
		One possible construction of $\mc T$ is shown in Fig.~\ref{fig:telesim} (see Ref.~\cite{bennett1996mixed}).
		In word, one first applies a Bell measurement on the input state $\rho$ and half of the Choi state $J_\Lambda$. Then, conditioned on the Bell measurement outcome $\ket{\Psi_b}$, one applies a Pauli correction $P_b$ on the other half of $J_\Lambda$, which will then be equal to $\Lambda(\rho)$.
		Indeed, the post-measurement state conditioned on Bell measurement outcome $b$ is
		\begin{equation}
		\begin{aligned}
		    \rho_b &\propto
		    \bra{\Psi_b}_{AB}\rho^A\otimes J_\Lambda^{BC}\ket{\Psi_b}_{AB}\\
		    &\propto\sum_{a\in\mbb Z_2^{2n}}\lambda_a \bra{\Psi_b}\rho\otimes P_{a}\ket{\Psi_b}\otimes P_a^T\\
		    &=\frac{1}{2^n}\sum_{a\in\mbb Z_2^{2n}}\lambda_a (-1)^\expval{a,b}\Tr(\rho P_a)P_a.
		\end{aligned}
		\end{equation}
		After applying the Pauli correction, the state becomes
		\begin{equation}
		    P_b\rho_b P_b = \frac{1}{2^n}\sum_{a\in\mbb Z_2^{2n}}\lambda_a\Tr(\rho P_a)P_a = \Lambda(\rho),
		\end{equation}
		which justify the relation $\Lambda(\rho) = \mc T(\rho\otimes J_\Lambda)$.
		
		\begin{figure}[!htp]
			\centering
			\includegraphics[width = 0.5\textwidth]{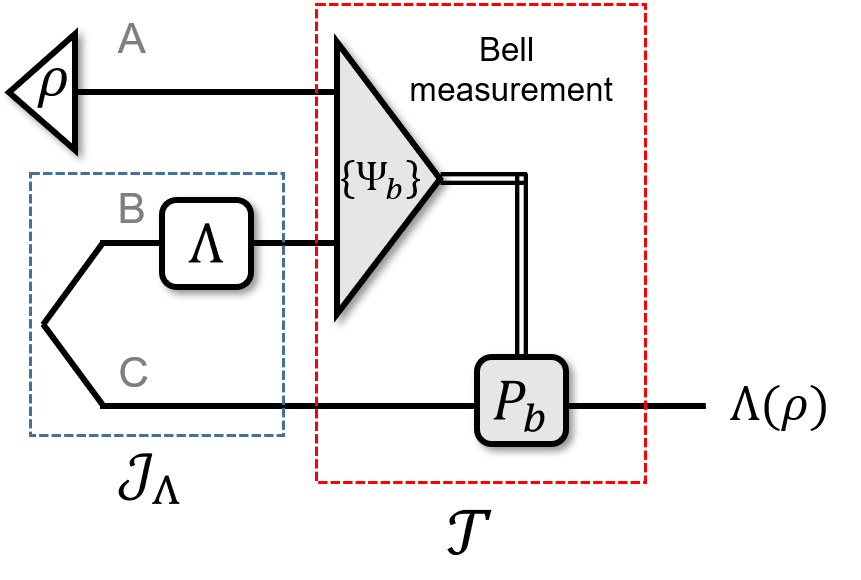}
			\caption{
				Construction of the teleportation simulation channel $\mc T$.
			}
			\label{fig:telesim}
		\end{figure}

		With the help of teleportation stretching, one can reduce any measurement protocols for $N$ copies of $\Lambda$ to a single POVM measurement on $N$ copies of $J_\Lambda$, as shown in Fig.~\ref{fig:reduction}.

		\begin{figure}[!htp]
			\centering
			\includegraphics[width = 0.5\textwidth]{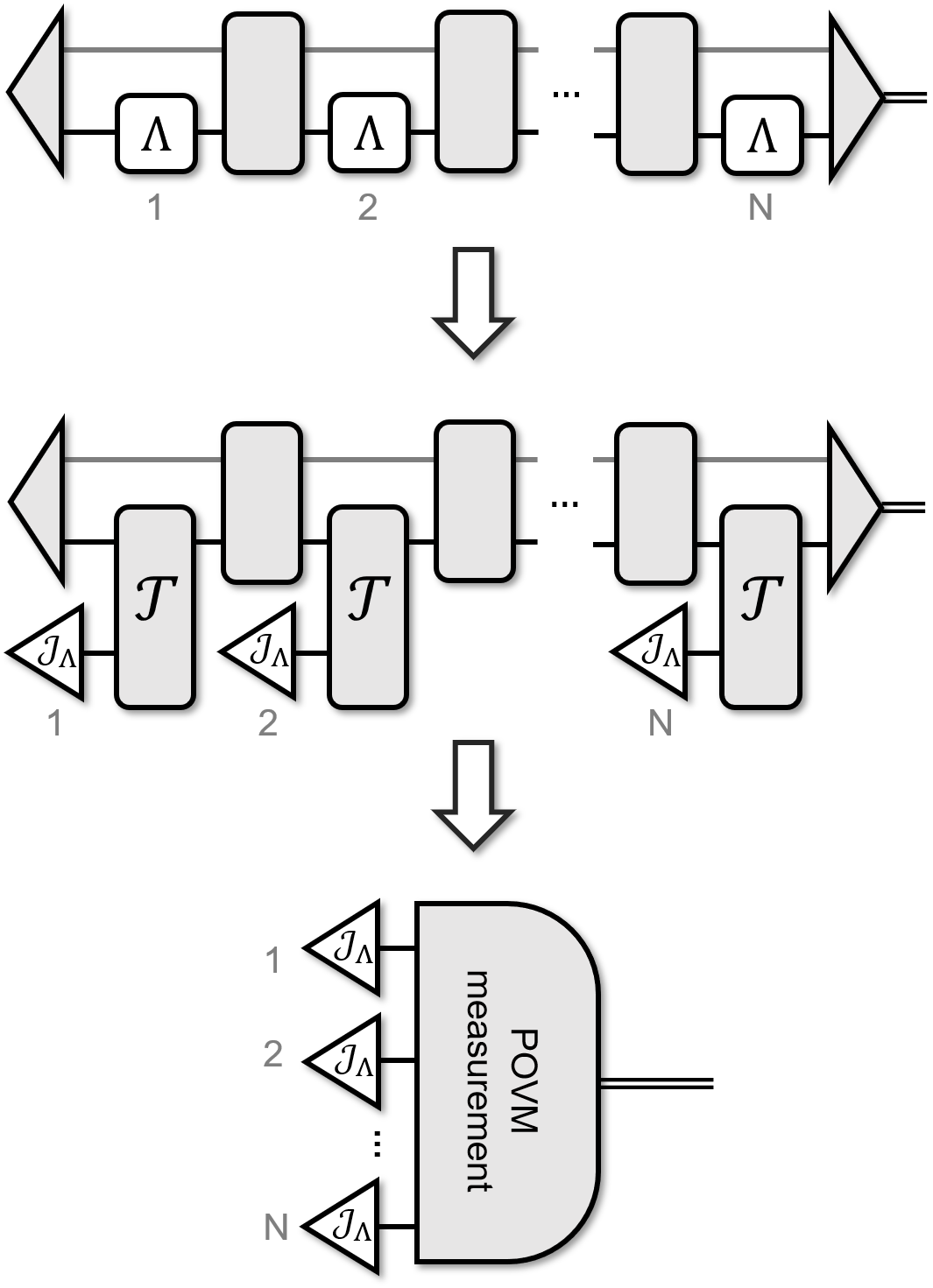}
			\caption{
				Teleportation stretching~\cite{pirandola2017fundamental,pirandola2019fundamental}: simulation of an arbitrary entangled measurement on $\Lambda$ by a single measurement on $J_\Lambda$. One just needs to simulate every application of $\Lambda(\cdot)$ by $\mc T((\cdot)\otimes\rho)$. The whole measurement protocol then become a joint POVM measurement on $N$ copies of $J_\Lambda$ with no adaptivity.  
			}
			\label{fig:reduction}
		\end{figure}

		\medskip

		Now, recall the communication task defined in Sec.~\ref{app:lower1}, where Alice and Bob share the following ``codebook''
		\begin{equation}
			(a,s)\in\{1,\cdots,4^n-1\}\times\{\pm 1\} ~\longrightarrow~ \Lambda_{(a,s)}(\cdot) = \frac{1}{2^n}\left(I\Tr(\cdot) + s P_a\Tr(P_a(\cdot))\right),
		\end{equation}
		and Alice randomly picks one possible $(a,s)$ and send $N$ copies of $\Lambda_{(a,s)}$ to Bob. 
		If there exists a fully entangled estimation protocol using $N$ samples and satisfying the assumption of Theorem~\ref{th:entlower}, Bob can determine Alice's choice of $(a,s)$ with high probability.
		According to Fano's inequality, the mutual information between the random variable pair $(a,s)$ and Bob's measurement result $o$ has the following lower bound
	\begin{equation}
		I((a,s):o) \ge \Omega(\log2(4^n-1)) = \Omega(n).
	\end{equation}
	One the other hand, since any measurement Bob conducts can be simulated by a measurement on $N$ copies of $J_{\Lambda_{(a,s)}}$, Holevo's theorem~\cite{holevo1973bounds,wilde2013quantum} can be apply to $I((a,s):o)$. We have
	\begin{equation}
		\begin{aligned}
			I((a,s):o) &\le S\left(\mathop{\mbb E}_{(a,s)} J^{\otimes N}_{\Lambda_{(a,s)}} \right) - \mathop{\mbb E}_{(a,s)}S\left( J^{\otimes N}_{\Lambda_{(a,s)}} \right)\\
			&=S\left(\mathop{\mbb E}_{(a,s)} J^{\otimes N}_{\Lambda_{(a,s)}} \right) - N \mathop{\mbb E}_{(a,s)}S\left( J_{\Lambda_{(a,s)}} \right)\\
			&\le 2nN - (2n-1)N\\
			& = N.
		\end{aligned}
	\end{equation}
	In the third line, the first term is a trivial upper bound for the von Neumann entropy on a $2^{2nN}$-dimensional Hilbert space. The second term uses the observation that 
	\begin{equation}
	    J_{\Lambda_{(a,s)}}= \frac{1}{4^n}(I\otimes I + s P_a\otimes P_a^{T})
	\end{equation}
	is a maximally mixed state on a $2^{2n-1}$-dimensional Hilbert space, thus $S(J_{\Lambda_{(a,s)}}) = 2n-1$.
	This yields the lower bound $N = \Omega(n)$.
	\end{proof}

\section{SPAM-robust ancilla-assisted Pauli gate benchmarking protocols}

In the main text, we have described an ancilla-assisted Pauli channel estimation protocol which provides exponential advantages over any ancilla-free protocols. 
Several issues need to be addressed before applying these protocols to a practical quantum noise characterization setting.
Firstly, in most cases, we do not have a ``black-box'' access to the Pauli noise channels of interest. Instead, they are often attached with some applied quantum gates. We must consider the effect of such gates in our protocol.
Secondly, the state preparation and measurement (SPAM) process would inevitably suffer from error. We would like to minimize the effect of such errors.

A recent breakthrough by Flammia and Wallman~\cite{flammia2020efficient} provides a way to address these issues, using ideas from \textit{randomized benchmarking}~(see \cite{helsen2020general} and references therein). Their task is to benchmark the Pauli error rates of the Pauli gate set. By concatenating $m+1$ layers of random Pauli gates, they effectively obtain the $m$th power of the Pauli twirl for the noise channel (plus a single Pauli correction gate at the end), under a gate-independent, time-stationary, and Markovian (GTM) noise assumption.
They then describe a protocol to estimate the quantity $A_a\lambda_a^{m}$ for all $a\in\mbb Z_2^{2n}$, where $\bm \lambda \coleq \{\lambda_a\}_a$ is the Pauli eigenvalues of the noise channel of interest, and $A_a$ is a SPAM related constant that is independent of $m$. By repeating this estimation procedure for different concatenating length $m$ and applying a single-exponential fitting of $A_a\lambda_a^m$ for each $a\in\mbb Z_2^{2n}$, one obtained an SPAM-robust estimation for $\bm\lambda$.

Importantly, the protocol in~\cite{flammia2020efficient} is ancilla-free, which means an exponential number of measurements is necessary to approximate $\bm\lambda$ to small error in $l_\infty$ distance, according to our Theorem~\ref{th:lower_main}~(C). In this section, we explain how the ancilla-assisted Pauli channel estimation protocol described in the main text can be extended to this Pauli gates benchmarking setting, which is able to estimate $\bm\lambda$ exponentially more sample-efficiently as well as being SPAM-robust.
The new protocol uses gate concatenation and single-exponential fitting, and can be viewed as a generalization of the methods in~\cite{flammia2020efficient}.

\medskip

Let us start by defining the task and specifying our assumptions.
The task is to characterize the noise of the $n$-qubit Pauli gate set. We assume the noise satisfies the GTM condition, which means every noisy implementation of Pauli gates can be written as
\begin{equation}\label{eq:GTM}
    \widetilde{\mc P}_a = \mc P_a \Lambda_G,\quad a\in \mbb Z_2^{2n},
\end{equation}
for an $a$-independent quantum channel $\Lambda_G$. 
The calligraphic $\mc P_a$ is the channel representation of the Pauli gate $P_a$, \textit{i.e.}, $\mc P_a(\cdot) \coleq P_a(\cdot)P_a$.
Our specific goal is to estimate the Pauli eigenvalues of the Pauli twirl of $\Lambda_G$, which is defined as
\begin{equation}\label{eq:PauliTwirl}
    \Lambda\coleq \frac{1}{4^n}\sum_{a\in \mbb Z_2^{2n}}\mc P_a\Lambda_G\mc P_a.
\end{equation}
In addition, we assume there to be an $n$-qubit ancillary systems that can be entangled with the main system. 
The ancilla will basically be used as a quantum memory~(see Fig.~\ref{fig:spam} in the main text).
A crucial assumption we need is that, the noise on the ancilla is negligible except for the entangled state preparation and measurement procedure.
In other word, the noise channel on the ancilla is independent of the concatenating length on the main system.
In practice, this requires (1) the crosstalk between the ancilla and the main system is negligible when applying gates only on the main system, and (2) the coherence time of the ancilla is much longer compared to the time of applying gates on the main system.
We expect that these assumptions can be satisfied by \textit{e.g.}, a near-term ion trap platform (see~\cite{wright2019benchmarking,pino2021demonstration}).
Ion trap system typically has very long coherence time; Besides, after preparing the entangled state, one can shuttle the ions to separate the ancilla and the main system during gate applications and shuttle them back for the entangled measurement~\cite{pino2021demonstration}.
This step can minimize the crosstalk, and the errors introduced there can be viewed as SPAM error (independent of the concatenating length) so our protocol will be naturally robust against them.

The benchmarking protocol is described in Algorithm~\ref{alg:spam}. In the following, we will prove the correctness and give a rough analysis on the sample efficiency. A more rigorous analysis on the sample complexity, optimization of the concatenating length ${\sf M}$ and the number of repetitions $R$, and other aspects of the protocol are left for future research. 
We also remark that, for simplicity, we focus on the case where $k=n$ ancillary qubits are available. For a restricted number of ancillary qubits $0<k<n$, one can also design a similar benchmarking protocol by hybridizing the $k=n$ protocol here and the $k=0$ protocol in~\cite{flammia2020efficient}. We omit the details about this hybrid protocol.

\begin{theorem}\label{th:spam}
    Given the aforementioned two assumptions about the noise model, the estimator given at Line~6 in Algorithm~\ref{alg:spam} satisfies
    \begin{equation}
        \mathop{\mbb E}\left[ \widehat{F}_a^{(k)}(m) \right]
        = A_a\lambda_a^m,\quad \forall a\in\mbb Z_2^{2n},
    \end{equation}
    where $A_a$, defined in Eq.~\eqref{eq:constA}, is a noise-dependent constant that is independent of $m$.
\end{theorem}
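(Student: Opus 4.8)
The plan is to track, in the Pauli transfer matrix (PTM) representation, the single coefficient along $P_a$ as it propagates through the circuit of Fig.~\ref{fig:spam}, and to show that averaging over the random Pauli sequence twirls the gate noise $\Lambda_G$ into the Pauli twirl $\Lambda$ of Eq.~\eqref{eq:PauliTwirl}, thereby producing the decay $\lambda_a^m$. In the PTM picture a Pauli channel is diagonal with entries $\lambda_b$, while the conjugation channel $\mc P_a$ is the diagonal sign matrix $D_a$ with $(D_a)_{bb} = (-1)^{\expval{a,b}}$, since $\mc P_a(P_b) = (-1)^{\expval{a,b}}P_b$; in particular $D_aD_b = D_{a\oplus b}$ and $D_a^2 = \mathds 1$. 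Using the GTM assumption Eq.~\eqref{eq:GTM}, every noisy gate factorizes as $\widetilde{\mc P}_{a_t} = \mc P_{a_t}\Lambda_G$, so for a fixed sequence $a_0,\dots,a_m$ the effective main-system channel is the ordered product $\mc E^{\ptm} = D_{a_m}\Lambda_G^{\ptm}\cdots D_{a_0}\Lambda_G^{\ptm}$, the idle ancilla contributing only the identity by the length-independence assumption.

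First I would change variables to the cumulative Pauli strings $b_t\coleq a_0\oplus\cdots\oplus a_t$ (with $b_{-1}\coleq 0$), so that $D_{a_t}=D_{b_t}D_{b_{t-1}}$ and the ordered product regroups as $\mc E^{\ptm} = D_{b_m}\bigl(\prod_{j=0}^{m-1}D_{b_j}\Lambda_G^{\ptm}D_{b_j}\bigr)\Lambda_G^{\ptm}$. Because $(a_0,\dots,a_m)\mapsto(b_0,\dots,b_m)$ is a bijection, the $b_t$ are independent and uniform, so averaging over the random gates turns each inner factor into the twirl $\tfrac{1}{4^n}\sum_b D_b\Lambda_G^{\ptm}D_b = \Lambda^{\ptm}$. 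The leftmost $D_{b_m}$ is the (known) total correction; compensating it by the post-processing sign $(-1)^{\expval{a,b_m}}$ in the estimator and averaging acts as the projector $\tfrac{1}{4^n}\sum_{b_m}(-1)^{\expval{a,b_m}}D_{b_m}$ onto the $P_a$ mode. Since $(\Lambda^{\ptm})^m$ is diagonal with entries $\lambda_b^m$, the noisy Bell prepare-and-measure readout then extracts the $(a,a)$ entry of $(\Lambda^{\ptm})^m\Lambda_G^{\ptm}$, namely $\lambda_a^m(\Lambda_G^{\ptm})_{aa}$.

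It remains to collect the $m$-independent prefactors into $A_a$. The single residual un-twirled layer $(\Lambda_G^{\ptm})_{aa}$, together with the noisy Bell preparation $\widetilde\Psi_0$ and measurement $\{\widetilde\Psi_v\}$, contribute only multiplicative constants to the $P_a$ mode, and by the assumption that the ancilla and SPAM noise are independent of the concatenation length $m$ each such factor is independent of $m$. Their product is precisely the constant $A_a$ of Eq.~\eqref{eq:constA}, which yields $\mathop{\mbb E}[\widehat F_a^{(k)}(m)] = A_a\lambda_a^m$.

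The step I expect to be the main obstacle is this twirling argument together with the correct bookkeeping of Pauli signs: one must verify that the random sequence plus the final correction implements exactly $m$ independent Pauli twirls of $\Lambda_G$, so that the only surviving $m$-dependence is the clean factor $\lambda_a^m$, and that the residual layer and all SPAM overlaps genuinely decouple into a single $m$-independent $A_a$. This rests on both hypotheses at once — GTM guarantees that the same gate-independent $\Lambda_G$ (hence the same twirl $\Lambda$) appears at every step, while the length-independence of the ancilla noise guarantees that the preparation and measurement overlaps do not secretly depend on $m$.
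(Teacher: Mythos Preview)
Your proposal is correct and follows essentially the same route as the paper's proof: the change of variables to cumulative Pauli strings $b_t=a_0\oplus\cdots\oplus a_t$ (the paper's $a_t'$), the resulting telescoping that sandwiches each $\Lambda_G$ between matching $\mc P_{b_t}$'s, the independent uniform averaging to produce $m$ copies of the Pauli twirl $\Lambda$, and the identification of the leftover un-twirled layer together with the noisy Bell SPAM as the $m$-independent constant $A_a$ are all exactly what the paper does. The only cosmetic difference is that you phrase the argument in terms of diagonal sign matrices $D_a$ in the PTM representation while the paper keeps the channel notation $\mc P_a$; one small imprecision is your remark that the readout ``extracts the $(a,a)$ entry $\lambda_a^m(\Lambda_G^{\ptm})_{aa}$'' --- in the noisy case the $b_m$-projector fixes only the left index to $a$, so what survives is $\lambda_a^m\,\lbra{\sigma_a}\Lambda_G$ contracted against the noisy Bell overlaps, which is precisely the $A_a$ of Eq.~\eqref{eq:constA} rather than a bare diagonal entry times a separate SPAM factor.
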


Theorem~\ref{th:spam} guarantees that, given sufficiently many concatenating length $m$ and circuit samples $R$, Algorithm~\ref{alg:spam} can indeed converge to the true Pauli eigenvalues $\bm\lambda$, in a SPAM-error resilient manner. 
Since $\widehat{F}_a^{(k)}(m)$ takes value from $\{1,-1\}$, Hoeffding's bound says that a constant number of samples is enough to estimate its expectation to constant additive error with $1-o(1)$ success probability, for any specific $a$ and $m$.
By the union bound, $\mc O(n)$ samples are enough for this to hold for all $a\in\mbb Z_2^{2n}$ simultaneously with high probability.
If we further assume that the noise is weak, so that both $A_a$ and $\lambda_a$ are lower bounded by some constant, then a constant number of $m$ and the above-achieved constant additive precision is enough for a small final estimation error for $\bm\lambda$ in $l_\infty$ distance, which implies a total sample complexity of $\mc O(n)$. 
Therefore, Algorithm~\ref{alg:spam} is indeed exponentially more sample-efficient than the ancilla-free protocol~\cite{flammia2020efficient} for this specific task.

\begin{figure}[htb]
\begin{algorithm}[H]
		\caption{SPAM-robust ancilla-assisted Pauli gate benchmarking}
		\label{alg:spam}
		\begin{algorithmic}[1]
			\Require
			(1) List of concatenating length $\sf M$.
			(2) Number of Repetitions $R$.
			(3) Noisy implementation of Pauli gates $\widetilde{\mc P}_a = \mc P_a\Lambda_G$.
			\Ensure
			SPAM-robust estimates $\widehat{\bm\lambda}$ for the Pauli eigenvalues of $\Lambda$ as defined in 
			Eq.~\eqref{eq:PauliTwirl}.
            \For{$m\in {\sf M}$}
            \For{$k = 1~\textbf{to}~R$}
            \State Prepare the (noisy) Bell state $\ket{\widetilde{\Psi}^+}$ between the ancillary system and the main system.
            \State Sequentially apply $m+1$ random (noisy) $n$-qubit Pauli gates $\{\widetilde{P}_{a_t}\}_{t=0}^{m}$ to the main system.
            \State Apply the (noisy) Bell measurement $\{\ket{\widetilde{\Psi}_v}\}_v$ with outcome $v$.
            \State $\widehat{F}^{(k)}_{a}(m)\coleq(-1)^{\expval{a,v}+\sum_{t=0}^m\expval{a,a_t}}$~\textbf{for all}~$a\in\mbb Z_2^{2n}$.
            \EndFor
            \EndFor
            \For{$a\in\mbb Z_2^{2n}$}
            \State $\widehat{F}_{a}(m)\coleq \frac{1}{R}\sum_{k=1}^R\widehat{F}^{(k)}_{a}(m)$.
            \State Fit $\widehat{F}_{a}(m)$ to the single-exponential decay model $\widehat{A}_a\widehat{\lambda}_a^m$.
            \EndFor
            \State\Return $\widehat{\bm\lambda}\coleq\{\widehat{\lambda}_a\}_a$.
		\end{algorithmic}
\end{algorithm}
\end{figure}

Before presenting the proof of Theorem~\ref{th:spam}, we introduce the \emph{Pauli-transfer-matrix} (PTM) representation to simplify notations.
A linear operator $O$ acting on a $2^n$-dimensional Hilbert space can be viewed as a vector in a $4^n$-dimensional Hilbert space. We denote this vectorization of $O$ as $\lket{O}$ and the corresponding Hermitian conjugate as $\lbra{O}$. The inner product within this space is the \emph{Hilbert-Schmidt} product defined as $\lbraket{A}{B}\coleq \Tr(A^\dagger B)$. The \emph{normalized Pauli operators} $\{\sigma_a\coleq P_a/\sqrt{2^{n}},~a\in\mbb Z_2^{2n}\}$ forms an orthonormal basis for this space. 
In the PTM representation, a superoperator (\textit{i.e.}, quantum channel) becomes an operator acting on the $4^n$-dimensional Hilbert space, sometimes called the Pauli transfer operator. 
Explicitly, we have $\lket{\Lambda(\rho)} = \Lambda^\ptm\lket{\rho} \equiv \Lambda\lket{\rho}$,
where we use the same notation to denote a channel and its Pauli transfer operator, which should be clear from the context.
Specifically, a general Pauli channel $\Lambda$ has the follwing Pauli transfer operator
$$
\Lambda = \sum_{a\in\mbb Z_2^{2n}} \lambda_a\lketbra{\sigma_a}{\sigma_a},
$$
where $\{\lambda_a\}_a$ are the Pauli eigenvalues. It is also obvious that the $m$-th power of $\Lambda$ is
$$
\Lambda^m = \sum_{a\in\mbb Z_2^{2n}} \lambda^m_a\lketbra{\sigma_a}{\sigma_a}.
$$

\noindent Using the PTM representation, the constant $A_a$ in Theorem~\ref{th:spam} is defined as
\begin{equation}\label{eq:constA}
    A_a \coleq \sum_{v\in\mbb Z_2^{2n}}(-1)^\expval{a,v} \lbra{\widetilde{\Psi}_v}\mathds 1\otimes(\lket{\sigma_a}\lbra{\sigma_a}\Lambda_G) \lket{\widetilde{\Psi}^+},
\end{equation}
where $\lket{\widetilde{\Psi}^+}$ is just the PTM representation for the density matrix of the (noisy) Bell state %
$\widetilde{\Psi}^+$.
Same for $\lket{\widetilde{\Psi}_b}$.
One can verify that $A_a=1$ for the noiseless case (where there is no SPAM error and $\Lambda_G=\mathds 1$).

\begin{proof}[Proof of Theorem~\ref{th:spam}]
    The following proof is a generalization of~\cite[Proposition~5]{flammia2020efficient} and we borrow some of their presentations.
    Consider the probability that a specific sequence of Pauli gates $\{P_{a_t}\}_{t=0}^m$ is sampled (Line~4,  Alg.~\ref{alg:spam}) and the Bell measurement outcome is $v$ (Line~5, Alg.~\ref{alg:spam}),
    \begin{equation}\label{eq:spam_dis1}
    \begin{aligned}
        \Pr(a_0,\cdots,a_m,v) &= 
        \frac{1}{4^{n(m+1)}} \lbra{\widetilde\Psi_v} \mathds 1\otimes\left(
        \mc P_{a_m}\Lambda_{G}\cdots \mc P_{a_1}\Lambda_{G}\mc P_{a_0}\Lambda_{G}
        \right) \lket{\widetilde\Psi^+}
        \\
        &=\frac{1}{4^{n(m+1)}} \lbra{\widetilde\Psi_v} \mathds 1\otimes\left(
          \prod_{t=m}^0 \mc P_{a_t}\Lambda_{G}
        \right) \lket{\widetilde\Psi^+},
    \end{aligned}
    \end{equation}

	\noindent Here we use the assumption that the noise on the ancilla is negligible except for the state preparation and measurement part. We can absorb the noise channel on the ancilla into the SPAM error that is independent of the concatenating length $m$. That is why we can have an $\mathds 1$ on the ancillary system.

    The distribution can be rewritten as
    \begin{equation}\label{eq:spam_dis2}
    \begin{aligned}
        \Pr(a_0,\cdots,a_m,v) 
        &=\frac{1}{4^{n(m+1)}} \lbra{\widetilde\Psi_v} \mathds 1\otimes P_{a_{m}'}\left(
          \prod_{t=m-1}^0 \mc P_{a_t'}\Lambda_{G}\mc P_{a_t'}
        \right) \Lambda_{G}\lket{\widetilde\Psi^+},
    \end{aligned}
    \end{equation}
    where we define $a_t'\coleq \sum_{k=0}^t a_k$ (bit-wise modulo 2 sum). 
    Taking this change-of-variables and averaging over $\{a_0,~\cdots,~a_{m-1}\}$, we get
    \begin{equation}\label{eq:spam_dis3}
    \begin{aligned}
        \Pr(a'_m,v) 
        &=\frac{1}{4^{n}} \lbra{\widetilde\Psi_v} \mathds 1\otimes \mc P_{a_{m}'}\left(
          \prod_{t=m-1}^0 \mathop{\mbb E}_{a_{t}'\in\mbb Z_2^{2n}} \mc P_{a_t'}\Lambda_{G}\mc P_{a_t'}
        \right) \Lambda_{G}\lket{\widetilde\Psi^+}\\
        &=\frac{1}{4^{n}} \lbra{\widetilde\Psi_v} \mathds 1\otimes \mc P_{a_{m}'}
          \Lambda^m
        \Lambda_{G}\lket{\widetilde\Psi^+}\\
        &=\frac{1}{4^{n}}\sum_{a\in\mbb Z_2^{2n}}\lambda_a^m
        \lbra{\widetilde\Psi_v} \mathds 1\otimes (\mc P_{a_{m}'}
          \lketbra{\sigma_a}{\sigma_a}
        \Lambda_{G})\lket{\widetilde\Psi^+}\\
        &=\frac{1}{4^{n}}\sum_{a\in\mbb Z_2^{2n}}(-1)^\expval{a,a_m'}\lambda_a^m
        \lbra{\widetilde\Psi_v} \mathds 1\otimes (
          \lketbra{\sigma_a}{\sigma_a}
        \Lambda_{G})\lket{\widetilde\Psi^+}.
    \end{aligned}
    \end{equation}
    Define $z\coleq v+a_m'$, the marginal distribution of $z$ is
    \begin{equation}\label{eq:spam_dis4}
    \begin{aligned}
        \Pr(z) 
        &= \sum_{v\in\mbb Z_2^{2n}}\Pr(v+z,v)\\
        &= \frac{1}{4^{n}}\sum_{a\in\mbb Z_2^{2n}}(-1)^\expval{a,z}
        \lambda_a^m
        \sum_{v\in\mbb Z_2^{2n}}(-1)^\expval{a,v}
        \lbra{\widetilde\Psi_v} \mathds 1\otimes (
          \lketbra{\sigma_a}{\sigma_a}
        \Lambda_{G})\lket{\widetilde\Psi^+}\\
        &= \frac{1}{4^{n}}\sum_{a\in\mbb Z_2^{2n}}(-1)^\expval{a,z}
        \lambda_a^m A_a.
    \end{aligned}
    \end{equation}
    Apply the inverse Walsh-Hadamard transform, we obtain
    \begin{equation}
        A_a\lambda_a^m = \sum_{z\in\mbb Z_2^{2n}}(-1)^\expval{a,z}\Pr(z).
    \end{equation}
    Therefore, $(-1)^\expval{a,z} = (-1)^{\expval{a,v}+\sum^m_{t=0}\expval{a,a_t}}$ is an unbiased estimator for $A_a\lambda_a^m$. In other word,
    \begin{equation}
        \mathop{\mbb E}\left[\widehat{F}_a^{(k)}(m)\right] \equiv
        \mathop{\mbb E}\left[ (-1)^\expval{a,z} \right] = 
        A_a\lambda_a^m.
    \end{equation}
    This is exactly the claim of Theorem~\ref{th:spam}.
\end{proof}

\end{appendix}

\end{document}